\definecolor {processblue}{cmyk}{0.96,0,0,0}
\theoremstyle{plain}
\definecolor{darkgreen}{rgb}{0,0.5,0}
\definecolor{darkblue}{rgb}{0,0,0.8}
\newcommand{\calP}{\ensuremath{\mathcal{P}}}
\newcommand{\OPT}{\mathsf{OPT}}
\newcommand{\ignore}[1]{}
\algnewcommand\algorithmicswitch{\textbf{switch}}
\algnewcommand\algorithmiccase{\textbf{case}}
\newcommand{\CONGEST}{\ensuremath{\mathsf{CONGEST}}\xspace}
\newcommand{\LOCAL}{\ensuremath{\mathsf{LOCAL}}\xspace}
\newcommand{\eps}{\varepsilon}
\renewcommand{\epsilon}{\varepsilon}
\newcommand{\poly}{\operatorname{\text{{\rm poly}}}}
\newcommand{\set}[1]{\left\{#1\right\}}
\DeclareMathOperator{\polylog}{\poly\log}
\DeclareMathOperator{\dist}{dist}
\newcommand{\hide}[1]{}
\newcommand{\FullOrShort}{short}
	\newcommand{\fullOnly}[1]{#1}
	\newcommand{\shortOnly}[1]{}
	\newcommand{\shortOnly}[1]{#1}
	\newcommand{\fullOnly}[1]{}
\renewcommand{\phi}{\varphi}
\newtheorem{theorem}{Theorem}[section]
\newtheorem{lemma}{Lemma}[section]
\theoremstyle{definition}
\theoremstyle{remark}
\numberwithin{equation}{section}
\begin{document}

\author{
	Salwa Faour\\
	University of Freiburg\\
   salwa.faour@cs.uni-freiburg.de
	\and
	Fabian Kuhn\\
	University of Freiburg\\
	kuhn@cs.uni-freiburg.de
}
\title{Usage of the \texttt{\textbackslash author} command}

	\title{  {\bf Approximate Bipartite Vertex Cover\\ in the CONGEST Model}}
        \date{}

	\maketitle
	

\begin{abstract} 
	
	  We give efficient distributed algorithms for the minimum vertex
	cover problem in bipartite graphs in the \CONGEST model. From 
	K\H{o}nig's theorem, it is well known that in bipartite graphs the
	size of a minimum vertex cover is equal to the size of a maximum
	matching. We first show that together with an existing
	$O(n\log n)$-round algorithm for computing a maximum matching, the
	constructive proof of K\H{o}nig's theorem directly leads to a deterministic
	$O(n\log n)$-round \CONGEST algorithm for computing a minimum vertex
	cover. We then show that by adapting the construction, we can also 
	convert an \emph{approximate} maximum matching into an \emph{approximate}
	minimum vertex cover. Given a $(1-\delta)$-approximate
	matching for some $\delta>1$, we show that
	a $(1+O(\delta))$-approximate vertex cover can be computed in time
	$O\big(D+\poly\big(\frac{\log n}{\delta}\big)\big)$, where $D$ is
	the diameter of the graph. When combining with known graph
	clustering techniques, for any $\eps\in(0,1]$, this
	leads to a $\poly\big(\frac{\log n}{\eps}\big)$-time deterministic and
	also to a slightly faster and simpler randomized
	$O\big(\frac{\log n}{\eps^3}\big)$-round \CONGEST algorithm for
	computing a $(1+\eps)$-approximate vertex cover in bipartite
	graphs. For constant $\eps$, the randomized time complexity matches
	the $\Omega(\log n)$ lower bound for computing a
	$(1+\eps)$-approximate vertex cover in bipartite graphs even in the \LOCAL model. Our
	results are also in contrast to the situation in general
	graphs, where it is known that computing an optimal vertex cover
	requires $\tilde{\Omega}(n^2)$ rounds in the \CONGEST model and
	where it is not even known how to compute any 
	$(2-\eps)$-approximation in time $o(n^2)$.
	  
\end{abstract}

\section{Introduction \& Related Work}
\label{sec:intro}

In the minimum vertex cover (MVC) problem, we are given an $n$-node graph $G=(V,E)$ and we are asked to find a vertex cover of smallest possible size, that is, a minimum cardinality subset of $V$ that contains at least one node of every edge in $E$. In the distributed MVC problem, the graph $G$ is the network graph and the nodes of $G$ have to compute a vertex cover by communicating over the edges of $G$. At the end of a distributed vertex cover algorithm, every node $v\in V$ must know if it is contained in the vertex cover or not. Different variants of the MVC problem have been studied extensively in the distributed setting, see e.g., \cite{AstrandFPRSU09,bachrach_podc19,yehuda16,Bar-YehudaCMPP20,disc19_optcovering,censorhillel_disc17,GhaffariJN20,goeoes14_DISTCOMP,GrandoniKP08,GrandoniKPS08,lowerbound,nearsighted}. Classically, when studying the distributed MVC problem and also related distributed optimization problems on graphs, the focus has been on understanding the \emph{locality} of the problem. The focus therefore has mostly been on establishing how many synchronous communication rounds are necessary to solve or approximate the problem in the \LOCAL model, that is, if in each round, each node of $G$ can send an arbitrarily large message to each of its neighbors.

\medskip

\noindent\textbf{MVC in the \boldmath\LOCAL model.} 
The minimum vertex cover problem is closely related to the maximum matching problem, i.e., to the problem of finding a maximum cardinality set of pairwise non-adjacent (i.e., disjoint) edges. Since for every matching $M$, any vertex cover has to contain at least one node from each of the edges $\set{u,v}\in M$, the size of a minimum vertex cover is lower bounded by the size of a maximum matching. We therefore obtain a simple $2$-approximation $S$ for the MVC problem by first computing a maximal matching and by defining the vertex cover $S$ as $S:=\bigcup_{\set{u,v}\in M}\set{u,v}$. It has been known since the 1980s that a maximal matching can be computed in $O(\log n)$ rounds by using a simple randomized algorithm~\cite{alon86,itai86,luby86}. The fastest known randomized distributed algorithm for computing a maximal matching has a round complexity of $O(\log\Delta +\log^3\log n)$, where $\Delta$ is the maximum degree of the graph $G$~\cite{barenboim12,rounding}, and the fastest known deterministic algorithm has a round complexity of $O(\log^2\Delta\cdot \log n)$~\cite{rounding}. A slightly worse approximation ratio of $2+\eps$ can even be achieved in time $O\big(\frac{\log\Delta}{\log\log\Delta}\big)$ for any constant $\eps>0$. This matches the $\Omega\big(\min\set{\frac{\log\Delta}{\log\log\Delta},\sqrt{\frac{\log n}{\log\log n}}}\big)$ lower bound of \cite{lowerbound}, which even holds for any polylogarithmic approximation ratio. In \cite{goeoes14_DISTCOMP}, it was further shown that there exists a constant $\eps>0$ such that computing a $(1+\eps)$-approximate solution for MVC requires $\Omega(\log n)$ rounds even for bipartite graphs of maximum degree $3$. By using known randomized distributed graph clustering techniques~\cite{linial93,MPX13}, this bound can be matched: For any $\eps\in (0,1]$, a $(1+\eps)$-approximate MVC solution can be computed in time $O\big(\frac{\log n}{\eps}\big)$ in the \LOCAL model. It was shown in \cite{ghaffari2017complexity} that in fact all distributed covering and packing problems can be $(1+\eps)$-approximated in time $\poly\big(\frac{\log n}{\eps}\big)$ in the \LOCAL model. By combining with the recent deterministic network decomposition algorithm of \cite{polylogdecomp}, the same result can even be achieved deterministially. We note that all the distributed $(1+\eps)$-approximations for MVC and related problems quite heavily exploit the power of the \LOCAL model. They use very large messages and also the fact that the nodes can do arbitrary (even exponential-time) computations for free.

\medskip 

\noindent\textbf{MVC in the \boldmath\CONGEST model.} As the complexity of the distributed minimum vertex cover and related problems in the \LOCAL model is now understood quite well, there has recently been increased interest in also understanding the complexity of these problems in the more restrictive \CONGEST model, that is, when assuming that in each round, every node can only send an $O(\log n)$-bit message to each of its neighbors. Some of the algorithms that have been developed for the \LOCAL model do not make use of large messages and they therefore directly also work in the \CONGEST model. This is in particular true for all the maximal matching algorithms and also for the $(2+\eps)$-approximate MVC algorithm mentioned above. Also in the \CONGEST model, it is therefore possible to compute a $2$-approximation for MVC in $O(\log\Delta+\log^3\log n)$ rounds and a $(2+\eps)$-approximation in $O\big(\frac{\log \Delta}{\log\log\Delta}\big)$ rounds. However, there is no non-trivial (i.e., $o(n^2)$-round) \CONGEST MVC algorithm known for obtaining an approximation ratio below $2$. For computing an optimal vertex cover on general graphs, it is even known that $\tilde{\Omega}(n^2)$ rounds are necessary  in the \CONGEST model~\cite{censorhillel_disc17}. It is therefore an interesting open question to investigate if it is possible to approximate MVC within a factor smaller than $2$ in the \CONGEST model or to understand for which families of graphs, this is possible. The only result in this direction that we are aware of is a recent paper that gives $(1+\eps)$-approximation for MVC in the square graph $G^2$ in $O(n/\eps)$ \CONGEST rounds on the underlying graph $G$~\cite{Bar-YehudaCMPP20}.

\medskip

\noindent\textbf{MVC in bipartite graphs.} 
In the present paper, we study the distributed complexity of MVC in the \CONGEST model for bipartite graphs. Unlike for general graphs, where MVC is APX-hard (and even hard to approximate within a factor $2-\eps$ when assuming the unique games conjecture~\cite{MVC_UGChard}), for bipartite graphs, MVC can be solved optimally in polynomial time. While in general graphs, we only know that a minimum vertex cover is at least as large as a maximum matching and at most twice as large as a maximum matching, for bipartite graphs, K\H{o}nig's well-known theorem~\cite{koenig_diestel,koenig31} states that in bipartite graphs, the size of a maximum matching is always equal to the size of a minimum vertex cover. In fact, if one is given a maximum matching of a bipartite graph $G=(U\cup V,E)$, a vertex cover of the same size can be computed in the following simple manner. Assume that we are given the bipartition of the nodes of $G$ into sets $U$ and $V$ and assume that we are given a maximum matching $M$ of $G$. Now, let $L_0\subseteq U$ be the set of unmatched nodes in $U$ and let $L\subseteq U\cup V$ be the set of nodes that are reachable from $L_0$ over an alternating path (i.e, over a path that alternates between edges in $E\setminus M$ and edges in $M$). It is not hard to show that the set $S:=(U\setminus L)\cup (V\cap L)$ is a vertex cover that contains exactly one node of every edge in $M$. We note that this construction also directly leads to a distributed algorithm for computing an optimal vertex cover in bipartite graphs $G$. The bipartition of $G$ can clearly be computed in time $O(D)$, where $D$ is the diameter of $G$ and given a maximum matching $M$, the set $L$ can then be computed in $O(n)$ rounds by doing a parallel BFS exploration on alternating paths starting at all nodes in $L_0$. Together with the $O(n\log n)$-round \CONGEST algorithm of \cite{ahmadi18} for computing a maximum matching, this directly leads to a deterministic $O(n\log n)$-round \CONGEST algorithm for computing an optimal vertex cover in bipartite graphs. As our main contribution, we show that it is not only possible to efficiently convert an optimal matching into an optimal vertex cover, but we can also efficiently turn an \emph{approximate} solution of the maximum matching problem in a bipartite graph into an \emph{approximate} solution of the MVC problem on the same graph. Unlike for MVC, where no arbitrarily good approximation algorithms are known for the \CONGEST model, such algorithms are known for the maximum matching problem~\cite{ahmadi18,yehuda17,lotker15}. We use this to develop polylogarithmic-time approximation schemes for the bipartite MVC problem in the \CONGEST model. We next discuss our main contributions in more detail.

\subsection{Contributions} 
\label{sec:contributions}

Our first contribution is a simple linear-time algorithm to solve the exact minimum vertex cover problem.

\begin{theorem} \label{thm:exact} 
	There is a deterministic \CONGEST algorithm to (exactly) solve the minimum vertex cover problem in bipartite 
	graphs in time $O(\OPT\cdot \log \OPT)$, where $\OPT$ is the size of a minimum vertex cover.  
\end{theorem}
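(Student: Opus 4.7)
The plan is to follow the constructive proof of K\H{o}nig's theorem sketched in the introduction, tightening the bound $O(n\log n)$ wherever possible to $O(\OPT\log\OPT)$. Since a minimum vertex cover decomposes over connected components, I process each component independently; fix a component $C$ with optimal size $\OPT_C \le \OPT$. For any vertex cover $S$ of $C$ the set $C\setminus S$ is independent, so no two consecutive vertices of a shortest path in $C$ both lie outside $S$; pairing consecutive vertices of a shortest path of length $k$ shows $\lceil(k+1)/2\rceil\le|S|$, whence $\mathrm{diam}(C)\le 2\OPT_C$. Thus any BFS within $C$ finishes in $O(\OPT)$ rounds.

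Building on this, I would (i) compute the bipartition $U\cup V$ of each component via BFS from an arbitrary root in $O(\OPT)$ rounds; (ii) invoke the deterministic \CONGEST maximum matching algorithm of Ahmadi et al.~\cite{ahmadi18} to obtain a maximum matching $M$; (iii) let $L_0\subseteq U$ be the set of nodes of $U$ left unmatched by $M$ (each node detects this locally) and run a parallel BFS from $L_0$ along alternating paths, where odd layers are reached via edges in $E\setminus M$ and even layers via edges in $M$; and (iv) output $S:=(U\setminus L)\cup(V\cap L)$, where $L$ is the set reached by the alternating BFS. By K\H{o}nig's classical argument, $S$ is a vertex cover of size exactly $|M|=\OPT$. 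Step (iii) terminates in $O(\OPT)$ rounds because every simple alternating path starting at $L_0$ uses a distinct matching edge on each backward step and therefore has length at most $2|M|+1=O(\OPT)$; the BFS only needs to propagate a single reachability bit per node, so it runs fine in \CONGEST.

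For the running time of step (ii), although~\cite{ahmadi18} is quoted with complexity $O(n\log n)$, a closer inspection of the analysis should yield a bound of $O(s^{*}\log s^{*})$ in the matching size $s^{*}=|M|$; since $|M|=\OPT$ in bipartite graphs by K\H{o}nig's theorem, step (ii) then takes $O(\OPT\log\OPT)$ rounds. This is the main obstacle: one has to confirm that the internal parameters of~\cite{ahmadi18} (in particular, the lengths of augmenting paths processed in each phase and the total number of phases) scale with the matching size rather than with $n$. The bound $2\OPT+1$ on the length of any augmenting path with respect to a non-maximum bipartite matching should drive this re-analysis. All other steps cost $O(\OPT)$ rounds by the diameter bound and are absorbed into the final complexity of $O(\OPT\log\OPT)$.
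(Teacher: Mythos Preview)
Your proposal is correct and follows essentially the same approach as the paper: compute a maximum matching via~\cite{ahmadi18}, compute the bipartition by BFS, run the alternating BFS from the unmatched $U$-nodes, and output $(U\setminus L)\cup(V\cap L)$ as in the constructive K\H{o}nig argument. The paper's proof is in fact terser than yours---it simply invokes~\cite{ahmadi18} with the stated complexity $O(\OPT\cdot\log\OPT)$ as a black box, without the hedging you include; so what you flag as ``the main obstacle'' is treated there as a known bound on that algorithm rather than something requiring re-analysis. Your explicit diameter bound $\mathrm{diam}(C)\le 2\OPT_C$ (via the pairing argument on shortest paths) is a nice addition that the paper leaves implicit when it writes the BFS step as $O(D)$.
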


\begin{proof}
	As mentioned, the algorithm is a straightforward \CONGEST implementation of K\H{o}nig's constructive proof.  Given a bipartite graph $G=(U\cup V,E)$, one first computes a maximum matching $M$ of $G$ in time $O(\OPT\cdot\log\OPT)$ by using the \CONGEST algorithm of \cite{ahmadi18}. One elects a leader node $\ell$ and computes a BFS tree of $G$ rooted at $\ell$ in time $O(D)$, where $D$ is the diameter of $G$. Let $U$ be the set of nodes at even distance from $\ell$ and let $V$ be the set of nodes at odd distance from $\ell$. Let $L_0$ be the set of nodes in $U$ that are not contained in any edge of $M$. Starting at $L_0$, we do a parallel BFS traversal on alternating paths. Let $L$ be the set of nodes that are reached in this way. The set $L$ can clearly be computed in time $O(|M|)=O(\OPT)$. As shown in the constructive proof of K\H{o}nig's theorem~\cite{koenig_diestel,koenig31}, the minimum vertex cover $S$ is now defined as $S:=(U\setminus L) \cup (V\cap L)$.
\end{proof}

Our main results are two distributed algorithms to efficiently compute $(1+\eps)$-approximate solutions to the minimum vertex cover problem. We first give a slightly more efficient (and also somewhat simpler) randomized algorithm.

\begin{theorem}\label{thm:randomized}
For every $\eps\in (0,1]$, there is a randomized \CONGEST algorithm that for any bipartite $n$-node graph $G$ computes a vertex cover of expected size at most $(1+\eps)\cdot\OPT$ in time $O\big(\frac{\log n}{\eps^3}\big)$, w.h.p., where $\OPT$ is the size of a minimum vertex cover of $G$.
\end{theorem}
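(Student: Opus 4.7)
The plan follows the two-stage recipe hinted at by the abstract: first compute a $(1-\delta)$-approximate maximum matching $M$, then convert it into a $(1+O(\delta))$-approximate minimum vertex cover by a \emph{local} variant of K\H{o}nig's construction. Setting $\delta=\Theta(\eps)$ will give the stated $O(\log n/\eps^3)$ round complexity. For the first stage, I would invoke an existing randomized $\CONGEST$ algorithm for approximate bipartite matching (for example the algorithm of Lotker--Patt-Shamir--Pettie, or the more recent algorithm of Ahmadi--Kuhn--Oshman), which w.h.p.\ produces a matching $M$ with $|M|\ge(1-\delta)\cdot\OPT$ in $O(\log n/\delta^3)$ rounds (using K\H{o}nig's theorem to identify the maximum matching size with $\OPT$).

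The obstacle in the second stage is that the algorithm of Theorem~\ref{thm:exact} spends $\Theta(D+|M|)$ time tracing alternating paths globally, which is too slow. To avoid it, I would use a randomized low-diameter network decomposition (for instance the Miller--Peng--Xu clustering) to partition $V$ into clusters $C_1,\dots,C_k$ of (strong) diameter $O(\log n/\eps)$ such that every edge of $G$ is cut (its endpoints lie in different clusters) with probability at most $\eps$; this takes $O(\log n/\eps)$ rounds. Within each cluster $C$ I would then replicate the procedure of Theorem~\ref{thm:exact} \emph{locally}: BFS from a cluster leader to compute a $2$-coloring $(U_C,V_C)$ of $C$, identify $L_{0,C}\subseteq U_C$ consisting of vertices unmatched by $M\cap E(G[C])$, traverse alternating paths confined to $C$ to obtain the set $L_C$ of alternating-reachable vertices, and emit the local cover $S_C:=(U_C\setminus L_C)\cup(V_C\cap L_C)$. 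Each substep runs in $O(\log n/\eps)$ rounds because cluster diameters are $O(\log n/\eps)$, and by K\H{o}nig's argument each $S_C$ covers every edge of $G[C]$. To handle inter-cluster edges, I would add both endpoints of every cut edge to the final cover $S$.

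The main obstacle is the approximation analysis, which must bound the expected excess $\E|S|-|M|$ by $O(\eps)\cdot\OPT$. There are two sources of excess. First, inside each cluster the local K\H{o}nig construction satisfies $|S_C|=|M\cap E(G[C])|+\bigl|\{v\in V_C\cap L_C : v\text{ unmatched in }M\cap E(G[C])\}\bigr|$; the second term counts endpoints of alternating paths confined to $C$ that either witness a genuine short augmenting path of $M$ in $G$ (of which $M$ has few, since $M$ is $(1-\delta)$-approximate) or are caused by the clustering having cut a single edge of an otherwise longer alternating structure. Second, cut matched edges and uncovered unmatched cut edges contribute additional vertices; matched cut edges number at most $\eps\cdot|M|$ in expectation, and uncovered unmatched cut edges can, by a charging argument over the alternating forest built inside each cluster, be paired with either existing or clustering-broken augmenting paths of $M$. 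The most delicate part will be this joint charging of cluster-interior ``extra'' vertices and cluster-boundary ``extra'' vertices to a single pool of $O(\eps)\cdot\OPT$ augmenting-path events, because the two contributions are correlated through the random decomposition. Once this is established, $\E|S|\le|M|+O(\eps)\cdot\OPT\le(1+O(\eps))\OPT$, and rescaling $\eps$ by an absolute constant gives the claimed $(1+\eps)$-approximation in expectation. The total round complexity is $O(\log n/\delta^3)+O(\log n/\eps)=O(\log n/\eps^3)$.
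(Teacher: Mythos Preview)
Your plan has a genuine gap in the second stage. Restricting the global matching $M$ to a cluster $C$ and then running the exact K\H{o}nig construction on $M_C:=M\cap E(G[C])$ does \emph{not} in general give a cover of size close to $\OPT_C$, and the excess cannot be charged to cut matching edges or to augmenting paths of $M$ as you propose.

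Concretely: let $u\in U$ be matched in $M$ to $w\in V$, and let $v_1,\dots,v_d\in V$ be further neighbours of $u$, each of degree~$1$ and unmatched in $M$. Pad the graph with $k$ disjoint matched edges so that $|M|=k+1=\OPT$; here $M$ is even a \emph{maximum} matching. If the decomposition cuts the single edge $\{u,w\}$ and leaves $u,v_1,\dots,v_d$ together in a cluster $C$, then $M_C=\emptyset$, $L_{0,C}=\{u\}$, $L_C=\{u,v_1,\dots,v_d\}$, and $S_C=V_C\cap L_C=\{v_1,\dots,v_d\}$ has size $d$, whereas $\OPT_C=1$. Only one matching edge was cut and $M$ has zero augmenting paths, yet the intra-cluster excess is $d-1$. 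Your charging is impossible here: all $d$ extra vertices would have to charge to the same cut edge $\{u,w\}$. Since this cut has probability $\Theta(\eps)$ under MPX and $d$ can be taken $\gg k/\eps$, the expected cover size is not $(1+O(\eps))\OPT$. A secondary issue is that running the \emph{full} K\H{o}nig alternating BFS inside $C$ need not terminate in $O(\mathrm{diam}(C))$ rounds, because alternating-path distance is not bounded by graph distance.

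The paper sidesteps both problems. It uses a \emph{maximal} matching only to weight the clustering (so that at most $\eta\,\OPT$ matching edges lie outside clusters, and adding their endpoints costs $\le 2\eta\,\OPT$). Then inside each cluster it computes a \emph{fresh} matching with no augmenting path of length $\le 2k-1$ via the Lotker--Patt-Shamir--Pettie algorithm, and applies a \emph{truncated} K\H{o}nig construction (Lemma~\ref{lemma:diameter1}): run only $2k$ BFS levels, form $k$ candidate covers $C_1',\dots,C_k'$, aggregate their sizes over the cluster tree, and output the smallest, which is guaranteed to have size $\le(1+1/k)\,\OPT_C$. With $k=\Theta(1/\eps)$ and cluster diameter $O(\log n/\eps)$ this yields the claimed $O(\log n/\eps^3)$ round complexity.
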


We remark that for constant $\eps$, the above result matches the lower bound of \cite{goeoes14_DISTCOMP} for the \LOCAL model. More precisely, in \cite{goeoes14_DISTCOMP}, it is shown that there exists a constant $\eps>0$ for which computing a $(1+\eps)$-approximation of minimum vertex cover requires $\Omega(\log n)$ rounds even on bounded-degree bipartite graphs. The second main result shows that similar bounds can also be achieved deterministically.

\begin{theorem}\label{thm:deterministic}
For every $\eps\in (0,1]$, there is a deterministic \CONGEST algorithm that for any bipartite $n$-node graph $G$ computes a vertex cover of size at most $(1+\eps)\cdot\OPT$ in time $\poly \big(\frac{\log n}{\epsilon}\big)$, where $\OPT$ is the size of a minimum vertex cover of $G$.
\end{theorem}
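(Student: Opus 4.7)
The plan is to combine three ingredients: (i) a deterministic $(1-\delta)$-approximate maximum matching algorithm in the \CONGEST model that runs in $\poly(\log n / \delta)$ rounds, (ii) the matching-to-vertex-cover conversion alluded to in the abstract, which produces a $(1+O(\delta))$-approximate cover from a $(1-\delta)$-approximate matching in $O(D+\poly(\log n / \delta))$ rounds, and (iii) a deterministic graph clustering / network decomposition step to eliminate the $O(D)$ term. Setting $\delta=\Theta(\eps)$ then yields the claimed $(1+\eps)$-approximation in $\poly(\log n / \eps)$ rounds.

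First I would invoke a deterministic polylogarithmic-round $(1-\delta)$-approximate maximum matching algorithm. Such an algorithm is obtained by taking a known deterministic matching routine (e.g. the $O(\log^2 \Delta \cdot \log n)$ maximal matching of~\cite{rounding}, or more directly the $(1-\delta)$-approximate maximum matching construction of~\cite{ahmadi18,lotker15}) and, wherever randomness was needed, substituting the deterministic network decomposition of~\cite{polylogdecomp}. The result is a matching $M$ with $|M|\geq (1-\delta)\cdot\OPT$, computed in $\poly(\log n / \delta)$ rounds.

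Next, I would apply the matching-to-cover transformation stated (in the abstract) to have complexity $O(D+\poly(\log n / \delta))$. The idea is to mimic K\H{o}nig's construction, but since $M$ is only near-optimal, the naive alternating-path exploration from $L_0$ can have depth $\Theta(n)$; the argument must instead identify, within a bounded number of alternating-path steps, a cover that ``wastes'' only $O(\delta\cdot\OPT)$ extra vertices. The $O(D)$ term reflects the need to compute the bipartition and possibly aggregate a few global quantities on a BFS tree. To remove this diameter term, I would run the whole procedure cluster-locally: using the deterministic low-diameter decomposition of~\cite{polylogdecomp}, partition $V$ into clusters of diameter $\poly(\log n)$ so that only an $O(\delta)$-fraction of edges are inter-cluster; then inside each cluster, deterministically simulate the bipartition (the clusters live inside a bipartite graph, so parities are consistent), compute $L_0$, and perform the bounded-depth alternating-path exploration. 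Each boundary edge that is not handled by a cluster can be fixed by adding both endpoints to the cover, contributing at most an additive $O(\delta\cdot\OPT)$ and thus preserving the $(1+O(\delta))$-approximation.

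The main obstacle is showing that the matching-to-cover conversion can be both truncated to depth $\poly(\log n / \delta)$ and confined to a network-decomposition cluster without losing more than an $O(\delta)$-fraction of vertices in the approximation guarantee. Concretely, I would argue that the symmetric difference of $M$ with any maximum matching $M^*$ consists of at most $\delta\cdot\OPT$ augmenting paths in $G$, so the bad set (vertices whose cover-membership is ambiguous under a truncated alternating-path search, together with endpoints of inter-cluster edges) has total size $O(\delta\cdot\OPT)$, and adding them wholesale to the cover produced by K\H{o}nig's construction run locally gives the desired $(1+O(\delta))\cdot\OPT$ bound. Finally, rescaling $\delta := c\eps$ for a sufficiently small constant $c$ turns this into the $(1+\eps)$ guarantee in $\poly(\log n / \eps)$ deterministic \CONGEST rounds.
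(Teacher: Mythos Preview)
Your outline has the right three-ingredient structure (approximate matching, König-style conversion, network decomposition), and matches the paper's architecture at a high level. But there is a genuine gap in the clustering step, and the conversion step is too vague to be correct as stated.

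\medskip
\noindent\textbf{The clustering gap.} You write that the decomposition leaves ``only an $O(\delta)$-fraction of edges'' inter-cluster, and that adding both endpoints of each such edge costs ``at most an additive $O(\delta\cdot\OPT)$''. This does not follow: an $O(\delta)$-fraction of \emph{all} edges is $O(\delta m)$ edges, and $m$ can be $\Theta(n^2)$ while $\OPT\le n$, so you could be adding $\Theta(\delta n^2)\gg \OPT$ nodes. The paper avoids this by a trick you are missing: it first computes a \emph{maximal} matching $M$ and weights edges by their membership in $M$ (weight~$1$ on $M$, weight~$0$ elsewhere). The decomposition is then required to be $(1-\eta)$-dense with respect to \emph{these} weights, so at most $\eta|M|\le\eta\cdot\OPT$ matching edges straddle clusters. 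Because $M$ is maximal, every edge of $G$ is incident to a matched node, and the set $X$ of matched nodes lying outside clusters has $|X|\le 2\eta\cdot\OPT$ and covers all fully-outside edges; edges with exactly one endpoint in a cluster are absorbed by growing each cluster by one hop (the $3$-hop separation is there precisely to allow this without merging clusters). Without this maximal-matching weighting, your additive error is uncontrolled.

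\medskip
\noindent\textbf{The conversion gap.} Your argument that ``the symmetric difference of $M$ with $M^*$ consists of at most $\delta\cdot\OPT$ augmenting paths, so the bad set has size $O(\delta\cdot\OPT)$'' does not work: those $\delta\cdot\OPT$ augmenting paths can be long, so the set of vertices whose truncated-BFS label is ``ambiguous'' is not bounded by $O(\delta\cdot\OPT)$. What the truncated König construction (the paper's Lemma~\ref{lemma:diameter1}) actually needs is that \emph{no augmenting path of length $\le 2k-1$ exists}. The paper obtains this deterministically by a separate algorithm (Lemmas~\ref{lemma:onephase} and~\ref{lemma:S1bound}): it iterates over lengths $d=1,3,\dots,2k-1$, and for each $d$ solves a set-cover instance (ground set $=$ all length-$d$ augmenting paths, sets indexed by unmatched vertices and matching edges) via a parallel greedy, using path-counting to estimate set sizes. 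This yields a removable set $S_1$ of size $O(k^3\log\Delta)\cdot\delta\cdot\OPT$; choosing $\delta$ small enough to absorb that polylog factor gives $|S_1|\le(\eps/2)\OPT$. Your proposal identifies that some removal is needed but does not supply a mechanism that provably keeps $|S_1|$ small.

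\medskip
A minor structural point: the paper computes the approximate maximum matching \emph{inside each cluster} after clustering, not globally before clustering. This matters because a globally good matching need not restrict to a good matching on a cluster, and the König-style analysis is applied per cluster.
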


\subsection{Our Techniques in a Nutshell}
\label{sec:nutshell}

We next describe the key ideas that leads to the results in Theorems \ref{thm:randomized} and \ref{thm:deterministic}. The core of our algorithms is a method to efficiently transform an approximate solution $M$ for the maximum matching problem into an approximate solution of MVC. More concretely, assume that we are given a matching $M\subseteq E$ of a bipartite graph $G=(U\cup V,E)$ such that $M$ is a $(1-\eps)$-approximate maximum matching of $G$ (for a sufficiently small $\eps>0$). In Section \ref{sec:diameter}, we then first show that we can compute a vertex cover $S\subseteq U\cup V$ of size $(1+O(\eps\polylog n))\cdot|M|$ (and therefore a $(1+O(\eps\polylog n))$-approximation for MVC) in time $O\big(D+\poly\big(\frac{\log n}{\eps}\big)\big)$, where $D$ is the diameter of $G$. If the matching $M$ has the additional property that there are no augmenting paths of length at most $2k-1$ for some $k=O(1/\eps)$, we show that such a vertex cover $S$ can be obtained by adapting the constructive proof of K\H{o}nig's theorem. Clearly, the bipartition of the nodes of $G$ into sets $U$ and $V$ can be computed in time $O(D)$. Now, we again define $L_0$ as the set of unmatched nodes in $U$ and more generally for any integer $i\in \set{1,\dots,2k}$, we define $L_i$ to be the set of nodes in $U\cup V$ that can be reached over an alternating path of length $i$ from $L_0$ and for which no shorter such alternating path exists. Note that all nodes in set $L_{2j-1}$ for $j\in\set{1,\dots,k}$ are matched nodes as otherwise, we would have an augmenting path of length at most $2k-1$. 
Note that any alternating path starting at $L_0$ starts with a non-matching edge from $U$ to $V$ and it alternates between non-matching edges from $U$ to $V$ and matching edges from $V$ to $U$. For every $j\geq 1$, the set $L_{2j}$ therefore exactly contains the matching neighbors of the nodes in $L_{2j-1}$ and we therefore have $|L_{2j}|=|L_{2j-1}|$. We will show that for every $j\in \set{1,\dots,k}$ the set
\[
S_j := \bigcup_{j'\in\set{1,\dots,j}} L_{2j'-1} \cup \left(U\setminus \bigcup_{j'\in\set{0,\dots,j-1}}L_{2j'}\right)
\]
is a vertex cover of size $|M| + |L_{2j}|=|M|+|L_{2j-1}|$. Because the sets $L_i$ are disjoint, clearly one of these vertex covers must have size at most $\big(1+\frac{1}{k}\big)\cdot|M|=(1+O(\eps))\cdot|M|$.

If we do not have the guarantee that $M$ does not have short augmenting paths, we show that one can first delete $O(\eps\cdot |M|\cdot \polylog n)$ nodes from $U\cup V$ such that in the induced subgraph of the remaining nodes, there are no short augmenting paths w.r.t.\ $M$. We also show that we can find such a set of nodes to delete in time $\poly\big(\frac{\log n}{\eps}\big)$. We can therefore then first compute a good vertex cover approximation for the remaining graph and we then obtain a vertex cover of $G$ by also adding all the removed nodes to the vertex cover. 

Given our algorithm to compute a good MVC approximation in time $O(D+\poly\log n)$ in Section \ref{sec:polylog}, we show how that in combination with known graph clustering techniques, we can obtain MVC approximation algorithms with polylogarithmic time complexities and thus prove Theorems \ref{thm:randomized} and \ref{thm:deterministic}. Given a maximal matching $M$, we show that we can compute disjoint low-diameter clusters such that all the edges between clusters can be covered by $O(\eps\cdot |M|)$ nodes. With randomization, such a clustering can be computed by using the random shifts approach of \cite{blelloch14,MPX13} and deterministically such a clustering can be computed by a simple adaptation of the recent network decomposition algorithm of \cite{polylogdecomp}. Since the clusters have a small diameter, we can then use the algorithm of Section \ref{sec:diameter} described above inside the clusters to efficiently compute a good MVC approximation.

\section{Model and Definitions}
\label{sec:model}

\noindent\textbf{Communication Model:}  We work with
the standard \CONGEST model \cite{peleg00}. The network is modelled as
an $n$-node undirected graph $G =
(V,E)$ with maximum degree at most $\Delta$ and each node has a unique
$O(\log n)$-bit identifier. The computation proceeds in synchronous
communication rounds. Per round, each node can perform some local
computations and send one $O(\log n)$-bit message to each of its
neighbors. At the end, each node should know its own part of the
output, e.g., whether it belongs to a vertex cover or not.

\smallskip

\noindent\textbf{Low-Diameter Clustering:} In order to reduce the
problem of approximating MVC on general (bipartite) graphs to
approximating MVC on low-diameter (bipartite) graphs, we need a
slightly generalized form of a standard type of graph clustering. Let
$G=(V,E,w)$ be a weighted graph with non-negative edge weights $w(e)$
and assume that $W:=\sum_{w\in E} w(e)$ is the total weight of all
edges in $G$. A subset $S\subseteq V$ of the nodes of $G$ is called
\emph{$\lambda$-dense} for $\lambda\in[0,1]$ if the total weight of
the edges of the induced subgraph $G[S]$ is at least $\lambda\cdot W$.
A \emph{clustering} of $G$ is a collection $\set{S_1,\dots,S_k}$ of
disjoint subsets $S_i\subseteq V$ of the nodes. A clustering
$\set{S_1,\dots,S_k}$ is called $\lambda$-dense if the set
$S:=S_1\cup\dots\cup S_k$ is $\lambda$-dense. The \emph{strong
	diameter} of a cluster $S_i\subseteq V$ is the (unweighted) diameter
of the induced subgraph $G[S_i]$ and the \emph{weak diameter} of a
cluster $S_i\subseteq V$ is the maximum (unweighted) distance in $G$
between any two nodes in $S_i$. The strong/weak diameter of a
clustering $\set{S_1,\dots,S_k}$ is the maximum strong/weak diameter
of any cluster $S_i$. A clustering $\set{S_1,\dots,S_k}$ is called
\emph{$h$-hop separated} for some integer $h\geq 1$ if for any two
clusters $S_i$ and $S_j$ ($i\neq j$), we have
$\min_{(u,v)\in S_i\times S_j} d_G(u,v)\geq h$, where $d_G(u,v)$
denotes the hop-distance between $u$ and $v$ in $G$. A clustering
$\set{S_1,\dots,S_k}$ is called \emph{$(c,d)$-routable} if we are in
addition given 
a collection of trees $T_1,\dots,T_k$ in $G$ such that for
every $i\in\set{1,\dots,k}$, the node set of $T_i$ contains the nodes
in $S_i$, the height of $T_i$ is at most $d$ and every edge $e\in E$
of $G$ is contained in at most $c$ trees $T_1,\dots,T_k$. Note that a
$(c,d)$-routable clustering clearly has weak diameter at most
$2d$. Note also that any clustering with strong diameter $d$ can
easily be extended to a $(1,d)$-routable clustering by computing a BFS
tree $T_i$ for the induced subgraph $G[S_i]$ of each cluster $S_i$. 

\section{Approximating MVC in Time Linear in the Diameter}
\label{sec:diameter}
In this section, we show how to compute a minimum vertex cover
approximation in time $O(D+\poly\log n)$, where $D$ is the diameter of
the graph. A bit more precisely, we will show the following. Let
$G=(V,E)$ be a bipartite graph with diameter $D$ and let $G'=(V',E')$ be a
subgraph of $G$. Assume that each node of $G$ knows if it is contained
in the set $V'$ and which of its edges are contained in the set
$E'$. We then show that for any $\eps\in(0,1]$, we can compute a
$(1+\eps)$-approximate vertex cover of $G'$ in time
$O\big(D+\poly\big(\frac{\log n}{\eps}\big)\big)$ in the \CONGEST model on graph $G$.

Given a matching $M$ of any graph $G$, a path is said to be augmenting
w.r.t $M$ in $G$ if it is a path that starts and ends with unmatched
vertices and alternates between matched and unmatched edges w.r.t.\
$M$ in $G$. Inspired by K\H{o}nig's constructive proof, we first
describe for an integer $k \geq 1 $, a deterministic algorithm that
gives an approximate minimum vertex cover in bipartite graphs from an
approximate maximum matching with the guarantee that no augmenting
paths of length at most $2k-1$ exist in the graph. We will then use
this algorithm as a subroutine in all our subsequent algorithms.  We
remark that a similar but randomized construction has also been used
by Feige, Mansour, and Schapire for the local computation algorithms
model~\cite{FMS15}.

\begin{lemma}\label{lemma:diameter1}
	Let $G=(V,E)$ be a bipartite graph of diameter $D$, let $G'=(V',E')$ be a subgraph of $G$ (i.e., $V'\subseteq V$ and $E'\subseteq E$), and let $k\geq 1$ be an integer parameter. Assume that $M$ is a matching of $G'$ s.t.\ there exists no augmenting path of length at most $2k-1$ w.r.t.\ $M$ in $G'$. Then, there exists a deterministic \CONGEST model algorithm to compute a $(1+1/k)$-approximate minimum vertex cover of $G'$ in $O(D+k)$ rounds on graph $G$.  
\end{lemma}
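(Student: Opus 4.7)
The plan is to give a constructive algorithm that mirrors the proof of K\H{o}nig's theorem, but stops after $k$ alternating layers and picks the cheapest ``cut'' across the layers. I would organize it as follows.

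First, I would establish the combinatorial setup. Elect a leader in $G$ and grow a BFS tree of $G$, which in $O(D)$ rounds both yields the bipartition $U\cup V$ of $G$ (hence of $G'$, since $V'\subseteq V$) and supplies a communication backbone for later aggregation. I would set $L_0 \subseteq U\cap V'$ to be the nodes of $U\cap V'$ unmatched by $M$, and for $i=1,\dots,2k-1$, let $L_i\subseteq V'$ be the set of nodes reachable from $L_0$ via an alternating path in $G'$ of length exactly $i$ but not less. These layers are computed by a synchronous parallel BFS along alternating edges of $G'$: each $v$ in the current frontier wakes up exactly its matching partner or its non-matching neighbors depending on the parity of the current level. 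This is pure \CONGEST (just one bit plus a level counter per message) and terminates in $O(k)$ rounds. Because $M$ admits no augmenting path of length $\leq 2k-1$, every node in $L_1,L_3,\dots,L_{2k-1}$ is matched; and because each layer $L_{2j-1}\subseteq V$ is reached across a non-matching edge while $L_{2j}\subseteq U$ is reached across a matching edge, one obtains the bijection $|L_{2j}|=|L_{2j-1}|$ (the matching partners).

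Second, for $j\in\{1,\dots,k\}$, I would define the candidate cover
\[
S_j \;\coloneq\; \bigcup_{j'=1}^{j} L_{2j'-1} \;\cup\; \Bigl(U\cap V'\setminus \bigcup_{j'=0}^{j-1} L_{2j'}\Bigr),
\]
and verify by case analysis that $S_j$ covers every edge $\{u,v\}\in E'$ with $u\in U$, $v\in V$. The cases are: (i) $u\notin \bigcup_{j'<j}L_{2j'}$, in which case $u\in S_j$ directly; (ii) $u\in L_{2j'}$ with $j'<j$ and $\{u,v\}\in M$, in which case the matching partner $v$ must lie in $L_{2j'-1}$ (or $L_1$ if $j'=0$, using that $L_0$ is unmatched, so any edge from $L_0$ is non-matching and lands in $L_1$); (iii) $u\in L_{2j'}$ with $j'<j$ and $\{u,v\}\notin M$, so extending the length-$2j'$ alternating path to $u$ by the non-matching edge $\{u,v\}$ yields an alternating path of length $2j'+1\leq 2j-1$ reaching $v$, forcing $v\in L_i$ for some odd $i\leq 2j-1$. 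A straightforward telescoping using $|L_{2j'}|=|L_{2j'-1}|$ then gives $|S_j|=|M|+|L_{2j-1}|$.

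Third, I would use a pigeonhole argument to pick a good $j$. Since $L_1,L_3,\dots,L_{2k-1}$ are disjoint subsets of matched $V$-nodes in $G'$, $\sum_{j'=1}^{k} |L_{2j'-1}| \leq |M|$, so for some $j^{\ast}$ we have $|L_{2j^{\ast}-1}|\leq |M|/k$. Hence $|S_{j^{\ast}}|\leq (1+1/k)|M|\leq (1+1/k)\cdot\OPT$, the last inequality being the weak-duality bound that any matching has size at most any vertex cover. To find $j^{\ast}$ distributedly, each node reports its level along the BFS tree of $G$; sums $|L_{2j'-1}|$ for $j'=1,\dots,k$ are aggregated in $O(D)$ rounds, the root computes the minimizer and broadcasts it back in $O(D)$ rounds, after which every node decides membership in $S_{j^{\ast}}$ locally from its own label ($L_i$-index and $U/V$-side). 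Adding up, the total round count is $O(D+k)$.

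The main obstacle, and the only nontrivial step, is proving that $S_j$ is a vertex cover; the whole argument hinges on case (iii), where one must use both the alternating-BFS shortest-length property (so that $v$ lands in some $L_i$ with $i\leq 2j'+1$) and the bipartition (so that $v\in V$ forces $i$ to be odd). Everything else, including the CONGEST-friendliness of the alternating BFS (each edge relays one constant-size message per level) and the pigeonhole step, is routine.
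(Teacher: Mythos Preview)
Your proposal is correct and follows essentially the same approach as the paper's proof: both perform a $2k$-level alternating BFS from the unmatched $U$-nodes, define the same family of candidate covers $S_j$ (the paper writes them as $C_s' = A_\infty' \cup \bigcup_{i\geq s} A_i' \cup \bigcup_{i\leq s} B_i'$, which unwinds to exactly your expression), establish the bijection $|L_{2j}|=|L_{2j-1}|$ and the size formula $|S_j|=|M|+|L_{2j-1}|$, and then pigeonhole over the disjoint odd layers to pick $j^\ast$. The only minor slip is that aggregating the $k$ layer sizes over the BFS tree takes $O(D+k)$ rounds via pipelining rather than $O(D)$, but this does not affect the stated overall $O(D+k)$ bound.
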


\begin{proof}
	Let $G=(V,E)$ be a bipartite graph. Let $k\geq 1$ be an integer
	parameter. As a first step, in $O(D)$ rounds, we elect a leader node
	and compute a BFS tree of $G$ rooted at the leader node. By using
	the BFS tree, we also compute the bipartition of $V$ into two
	independent sets in time $O(D)$. Let $A$ and $B$ be the parts of this
	bipartition. Let $M$ be a matching in $G$ such that there exists no
	augmenting paths w.r.t $M$ of length at most
	$2k-1$.
	
	In the following, we use $A'\subseteq A$ and $B'\subseteq B$ to
	denote the subsets of nodes in $A$ and $B$ that are in subgraph
	$G'$. We are now going to partition the sets $A'$ and $B'$. Initially, each unmatched node in $A'$ considers itself in set
	$A'_{0}$, while the remaining nodes $A' \setminus A'_0$ and $B'$ are considered in
	set $A'_{\infty}$ and $B'_{\infty}$, respectively. In the following, some nodes of
	$A'_{\infty}$ and $B'_{\infty}$ will be moved from $A'_{\infty}$ and $B'_{\infty}$ to other sets.
	To compute the partition, we define a directed graph $D(G')$ whose underlying graph is
	$G'$ as follows. In $D(G')$, every matched edge in $G'$ is represented by an arc
	from $B'$ to $A'$ and each unmatched edge in $G'$ is represented by
	an arc from $A'$ to $B'$. We then partition the sets $A'$ and $B'$
	as follows. We build a directed forest of depth $2k$ where all the
	nodes $A_0'$ (i.e., the unmatched $A'$-nodes) are
	the roots by running the first $2k$ iterations of parallel breadth
	first search on $D(G')$ starting from each node in $A'_0$. Then,
	each node in the odd level iteration $j \in \{1,3,...,2k-1\}$ of the
	BFS directed forest switches from $B'_{\infty}$ to
	$B'_{\frac{j+1}{2}}$. Similarly, nodes in the even level
	iteration $h \in \{2,4,...,2k\}$ switch from $A'_{\infty}$ to
	$A'_{\frac{h}{2}}$. Notice that nodes that have not been hit by this BFS
	are still in $A'_{\infty}$ or $B'_{\infty}$. The time required to do
	this parallel BFS and thus obtain the partition of $A'$ and $B'$ is
	$O(k)$ rounds in the \CONGEST model.
	
	By using the partition of $A'$ and $B'$, we can define $k$ different
	vertex covers $C_1',C_2',\dots,C_k'$ of $G'$ as follows. Vertex cover $C_s'$
	is defined as
	\[
	C'_s := A_{\infty}'\cup \bigcup_{i=s}^k A_i' \cup
	\bigcup_{i=1}^{s} B_i'.
	\]
	We first show that indeed each set $C_s'$ is a vertex cover of $G'$. Suppose $C_s'$ is not a
	vertex cover, i.e., there exists an edge $e=\{u,v\}$ such that
	$u \in \bigcup_{i=0}^{s-1} A_{i}'$ and
	$ v \in (\bigcup_{i=s+1}^{k} B_{i}' ) \cup B'_{\infty}$.  W.l.o.g.,
	assume that it is an edge that connects a node in $u\in A_x'$ for $j<s$
	to some node in $v\in B_y'$ for $y>s$. By
	the partitioning scheme of $A'$ and $B'$ such an edge cannot exist. 
	If the edge is unmatched, it is a directed edge from $A'$ to $B'$ in
	$D(G')$ and therefore in the parallel BFS on $D(G')$ $v$ would be
	reachable from $u$ and therefore $v$ would be in $B_{x+1}'$. The
	edge also cannot be a matching edge because in the parallel BFS,
	each node in $A_j'$ has been reached over its matching edge from a
	node in $B_j'$. Hence, $C_{s}'$ is a vertex cover for every $s\in \set{1,\dots,k}$.
	
	To finish the proof, we will show that there exists an
	$i^*\in\set{1,\dots,k}$ such that $|C_{i^*}|\leq (1+1/k)\cdot\OPT$,  where $\OPT$ is the size of the
	minimum vertex cover of $G'$. To prove this, we first observe that all
	the nodes in $B_1',\dots,B_k'$ are matched nodes w.r.t.\ the given
	matching $M$. This follows from
	the fact that the matching $M$ has no augmenting paths of length at
	most $2k-1$. If one of the nodes in $B_1',\dots,B_k'$ is unmatched,
	there is a directed path of length at most $2k-1$ in $D(G')$ from a
	node in $A_0'$ to an unmatched node in $B'$. Such a path corresponds
	to an augmenting path of the same length in $G'$ and therefore
	cannot exist. Because all node in $B_1',\dots,B_k'$ are matched, we
	can further conclude that for every $i\in \set{1,\dots,k}$, we have
	$|A_i'|=|B_i'|$ because the set $A_i'$ is exactly the set of nodes
	that can be reached over the matching edges from the nodes in $B_i'$
	in $G'$. Note also that the matching edge of a node in $B_i'$ cannot go
	to a node in $A_j'$ for $j<i$ because these nodes have their
	matching partners in set $B_j'$. 
	
	We now define $i^*$ as
	$i^* := \arg\min_{i \in \set{1,\dots,k}} B_i'$. The size of the
	vertex cover $C_{i^*}'$ can be bounded as follows:
	\begin{eqnarray*}
		|C_{i^*}'| 
		& = & |A_\infty'| + \sum_{i=i^*}^k |A_i'| + \sum_{i=1}^{i^*}|B_i'|\\
		& = & |M| - \sum_{i=1}^{i^*-1}|A_i'| + \sum_{i=1}^{i^*}|B_i'|\
		=\ |M| + |B'_{i^*}|.
	\end{eqnarray*}
	The second equation follows because all the nodes in $A'\setminus
	A_0'$ are matched nodes and the last equation follows because for
	all $i\in \set{1,\dots,k}$, we have $|A_i'|=|B_i'|$. From the fact
	that the sets $B_i'$ are disjoint and the definition of $i^*$, it
	now directly follows that $|B_{i^*}'|\leq
	\frac{1}{k}\cdot\sum_{i=1}^k|B_i'|\leq \frac{1}{k}\cdot|M|$. We
	therefore have $|C_{i^*}'|\leq (1+1/k)\cdot|M|$.
	
	It remains to show that the time complexity of the algorithm is
	$O(D+k)$ as claimed. We already saw that the partition of the nodes
	of $G'$ into $A_0',\dots,A_k'$, $A_{\infty}'$, $B_1',\dots,B_k'$,
	and $B_{\infty}'$ can be computed in time $O(k)$. To compute the
	vertex cover $C_{i^*}'$, it therefore remains to compute $i^*$. We
	do this, by first computing the sizes of all sets $B_i'$ by aggregating  the
	sums over the already computed BFS spanning tree of $G$. By using a
	standard pipelining argument, the $k$ sums can be computed in time
	$O(D+k)$. The root of the tree can then compute $i^*$ and broadcast
	it along the BFS tree to all nodes in $G'$ in time $O(D)$.
\end{proof} 

In combination with a distributed approximate maximum matching algorithm of Lotker, Patt-Shamir, and Pettie~\cite{lotker15}, Lemma \ref{lemma:diameter1} directly leads to a randomized $O(D+\poly\log n)$-round distributed approximation scheme for the MVC problem.
\begin{theorem}\label{approxMVC:rand}
	Let $G=(V,E)$ be a bipartite graph of diameter $D$ %
	and $G'=(V',E')$ be a subgraph of $G$ (i.e., $V'\subseteq V$ and $E'\subseteq E$). For $\epsilon\in(0,1]$, there is a randomized algorithm that gives a  $(1+\epsilon)$-approximate minimum vertex cover of $G'$ w.h.p.\ in $O(D+ \frac{\log n}{\epsilon^3} )$ rounds in the \CONGEST model on $G$.
\end{theorem}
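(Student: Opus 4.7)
My plan is to combine Lemma \ref{lemma:diameter1} with a distributed approximate matching routine in a black-box way. Set $k := \lceil 1/\epsilon\rceil$ so that $1 + 1/k \leq 1+\epsilon$. The core observation is that, by K\H{o}nig's theorem applied to the bipartite subgraph $G'$, the size $|M^*|$ of a maximum matching of $G'$ equals the size $\OPT$ of a minimum vertex cover of $G'$; in particular any matching $M$ of $G'$ satisfies $|M|\leq \OPT$. So if we can produce a matching $M$ of $G'$ that has no augmenting path of length at most $2k-1$ in $G'$, then Lemma \ref{lemma:diameter1} directly converts $M$ (in $O(D+k)=O(D+1/\epsilon)$ additional rounds) into a vertex cover of $G'$ of size at most $(1+1/k)|M| \leq (1+\epsilon)\cdot\OPT$.

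The first step is to compute such a matching $M$ of $G'$. We use the randomized \CONGEST approximate matching algorithm of Lotker, Patt-Shamir and Pettie~\cite{lotker15}, run on the subgraph $G'$ (each node knows whether it is in $V'$ and which of its incident edges are in $E'$, and nodes outside $V'$ simply relay $G'$-messages). Their algorithm produces, w.h.p., a matching with no augmenting path of length below any desired odd threshold $2k-1$ in $O(k^2\log n)$ rounds; for our choice $k=\Theta(1/\epsilon)$ this is $O(\log n/\epsilon^2)$, well within the $O(\log n/\epsilon^3)$ budget. (Alternatively, the more direct $O(\log n/\epsilon^3)$-round $(1-\epsilon)$-approximate bipartite matching routine in~\cite{lotker15} is based on iteratively eliminating short augmenting paths, so its output already has the required structural property.) The matching is computed with high probability, which is the only source of randomness in the whole procedure.

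Putting the two phases together: in $O(\log n/\epsilon^3)$ rounds on $G$ we obtain, w.h.p., a matching $M$ of $G'$ with no augmenting path of length $\leq 2k-1$, and then in $O(D+1/\epsilon)$ further rounds Lemma \ref{lemma:diameter1} (invoked with parameter $k$) yields a vertex cover $C'$ of $G'$ with $|C'|\leq (1+1/k)|M|\leq (1+\epsilon)|M^*|=(1+\epsilon)\OPT$. The total round complexity is $O(D + \log n/\epsilon^3)$ as claimed, and the guarantee holds w.h.p.\ since the only random step is the matching computation.

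The only nontrivial technical point, and the one I would double-check most carefully, is that the matching returned by~\cite{lotker15} can be guaranteed to have no augmenting paths of length at most $2k-1$ within the stated budget of $O(\log n/\epsilon^3)$ rounds; this is essentially inherent in their augmenting-path elimination approach, so once it is spelled out the rest of the argument is purely a composition of Lemma \ref{lemma:diameter1} with K\H{o}nig's weak-duality inequality $|M|\leq \OPT$.
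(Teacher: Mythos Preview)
Your proposal is correct and follows essentially the same route as the paper: set $k=\lceil 1/\epsilon\rceil$, invoke the Lotker--Patt-Shamir--Pettie algorithm to obtain a matching of $G'$ with no augmenting path of length at most $2k-1$, and then apply Lemma~\ref{lemma:diameter1}. The only minor discrepancy is the round complexity you quote for~\cite{lotker15}: the paper states it as $O(k^3\log n)$ rather than $O(k^2\log n)$, but as you note this is still within the claimed $O(\log n/\epsilon^3)$ budget, so the argument goes through unchanged.
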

\begin{proof}
	The approximate maximum matching algorithm of \cite{lotker15} is based on the classic approach of Hopcroft and Karp~\cite{hopkarp}. For a given graph and positive integer parameter $k$, the algorithm computes a matching $M$ of the graph such that there is no augmenting path of length at most $2k-1$ w.r.t.\ $M$. When run on an $n$-node graph, the algorithm w.h.p.\ has a time complexity of $O(k^3\cdot \log n)$ in the \CONGEST model. The theorem therefore directly follows by applying the algorithm of \cite{lotker15} on $G'$ with $k=\lceil 1/\eps\rceil$ and by Lemma \ref{lemma:diameter1}.
\end{proof}

\subsection{Deterministic MVC Approximation}
\label{sec:det_diameter}

The only part in the algorithm underlying Theorem \ref{approxMVC:rand} that is randomized is the approximate maximum matching algorithm of \cite{lotker15}. In order to also obtain a deterministic distributed MVC algorithm, we therefore have to replace the randomized distributed matching algorithm by a deterministic distributed matching algorithm. The algorithm of \cite{lotker15} is based on the framework of \cite{hopkarp} and it therefore guarantees that the resulting matching has no short augmenting paths. While the size of such a matching is guaranteed to be close to the size of a maximum matching, the converse is not necessarily true.\footnote{One can for example obtain an almost-maximum matching $M$ for some graph $G$ by taking a maximum matching of $G$ and flipping an arbitrary matched edge to unmatched. While the matching $M$ is obviously  a very good approximate matching, it has a short augmenting path of length $1$.} Unfortunately, we are not aware of an efficient deterministic \CONGEST model algorithm to compute a matching $M$ with no short augmenting paths. To resolve this issue, we therefore have to do some additional work.

For $\eps >0$, we define an augmenting path w.r.t.\ a matching in $G'$
to be short if it is of length at most $\ell= 2k'-1$, where
$k'= \lceil 2/\eps\rceil$. We define $\delta \leq \epsilon/(2\alpha)$
where $\alpha= O\big(\frac{\log\Delta}{\epsilon^3}\big)$. We first run
a polylogarithmic-time deterministic \CONGEST algorithm by Ahmadi et
al.~\cite{ahmadi18} to obtain a $(1-\delta)$-approximate maximum
matching $M$ in $G'$. This matching $M$ can potentially have short
augmenting paths. In order to get rid of short augmenting paths, we
then find a subset of nodes $S_1$ such that after deleting the nodes
in $S_1$, $M$ is a matching with no short augmenting paths in the
remaining subgraph $G''$ of $G'$. We show that we can select $S_1$
such that $|S_{1}| \leq \alpha \delta \OPT$, where $\OPT$ is the size
of a minimum vertex cover in $G'$. Now that we end up with a matching
in $G''$ with no short augmenting paths, we can directly apply our
subroutine from above on $G''$ and obtain a set $S_2$ which is a
$(1+\frac{\epsilon}{2})$-approximate vertex cover of $G''$. Finally,
we deduce that $C= S_1 \cup S_2$ is a vertex cover of $G'$. Moreover,
since the size of the minimum vertex cover of $G''$ is at most $\OPT$,
we get
$|C|= |S_{1}| + |S_2| \leq \alpha \delta \OPT+
(1+\frac{\epsilon}{2})\OPT= (1+\epsilon)\OPT$.

\medskip

\noindent\textbf{Finding \boldmath$S_1$.}
We next describe an algorithm to compute the set $S_1$. We assume that we are given an arbitrary $(1-\delta)$-approximate matching $M$ of $G'=(U'\cup V', E')$. As discussed above, we need to find a node set $S_1\subseteq U'\cup V'$ that allows to get rid of augmenting paths of length at most $\ell=2k'-1$.  This will be done in $(\ell+1)/2$ stages $d=1,3,\dots,\ell$. The objective of stage $d$ is to get rid of augmenting paths of length exactly $d$. Note that this guarantees that when starting stage $d$, there are no augmenting paths of length less than $d$ and thus in stage $d$, all augmenting paths of length $d$ are also shortest augmenting paths.
In the following, we focus on a single stage $d$. Formally, the subproblem that we need to solve in stage $d$ is the following.

We are given a bipartite graph $H=(U_H\cup V_H,E_H)$ with at most $n$ nodes and we are given a matching $M_H$ of $H$. We assume that the bipartition of the graph into $U_H$ and $V_H$ is given. Let $d$ be a positive odd integer and assume that $H$ has no augmenting paths of length shorter than $d$ w.r.t.\ $M_H$. The goal is to find a set $S_H\subseteq U_H\cup V_H$ that is as small as possible such that when removing the set $S_H$ from the nodes of $H$ and the resulting induced subgraph $H':=H[U_H\cup V_H\setminus S_H]$ has no augmenting paths of length at most $d$ w.r.t.\ the matching $M_H':=M_H\cap E(H')$, i.e., w.r.t.\ to the matching induced by $M_H$ in the induced subgraph $H'$ of the remaining nodes.

We therefore need to find a set $S_H$ of nodes of $H$ such that $S_H$ contains at least one node of every augmenting path of length $d$ w.r.t.\ $M_H$ in graph $H$. Further, we want to make sure that after removing $S_H$, in the remaining induced subgraph $H'$ w.r.t.\ the remaining matching $M_H'$, there are no augmenting paths that were not present in graph $H$ w.r.t.\ matching $M_H$. To guarantee this, we make sure that whenever we add a matched node in $U_H\cup V_H$ to $S_H$, we also add its matched neighbor to $S_H$. In this way, every node that is unmatched in $H'$ was also unmatched in $H$ and therefore any augmenting path in $H'$ is also an augmenting path in $H$.

\medskip

\noindent\textbf{Getting Rid of Short Augmenting Paths by Solving Set Cover.} The problem of finding a minimal such collection of matching edges and unmatched nodes can be phrased as a minimum set cover problem. The ground set $\calP$ is the set of all augmenting paths of length $d$ w.r.t.\ $M_H$ in $H$. For each unmatched node $v\in U_H\cup V_H$, we define $P_v$ as the set of augmenting paths of length $d$ that contain $v$. Similarly, for each matching edge $e\in M_H$, we define $P_e$ as the set of augmenting paths of length $d$ that contain $e$. 
The goal is to find a smallest set $C$ consisting of unmatched nodes $v$ in $U_H\cup V_H$ and matching edges $e\in M_H$ such that the union of the corresponding sets $P_v$ and $P_e$ of paths covers all paths in $\calP$. The set $S_H$ then consists of all nodes in $C$ and both nodes of each edge in $C$. Let us first have a look at the structure of augmenting paths of length $d$ in $H$. Let $L_0$ be the set of unmatched nodes in $U_H$ and more generally let $L_i\subseteq U_H\cup V_H$ for $i\in\set{0,\dots,d}$ be the set of nodes of $H$ that can be reached over a shortest alternating path of length $i$ from a node in $L_0$. Since the bipartition into $U_H$ and $V_H$ is given, the sets $L_0,\dots,L_d$ can be computed in $d$ \CONGEST rounds by a simple parallel BFS exploration. Since we assume that $H$ has no augmenting paths of length shorter than $d$, every augmenting path of length $d$ contains exactly one node from every set $L_i$ such that the node in $L_d$ is an unmatched node in $V_H$.

We use a variant of the greedy set cover algorithm to find the set  $C$ covering all the shortest augmenting paths in $H$. In order to apply the greedy set cover algorithm, we need to know the sizes of the sets $P_v$, i.e., for every node $v$, we need to know in how many augmenting paths of length $d$ the node $v$ is contained. To compute this number, we apply an algorithm that was first developed in \cite{lotker15} and later refined in \cite{yehuda17}. The following lemma summarizes the result of \cite{yehuda17,lotker15}.

\begin{lemma}\label{lemma:pathcounting}\cite{yehuda17,lotker15}
	Let $H=(U_H\cup V_H, E_H)$ be a bipartite graph of maximum degree at most $\Delta$ 
	and $M_H$ be a matching of $H$. There is a deterministic $O(d^2)$-round \CONGEST\ algorithm to compute the number of shortest augmenting paths of length $d$ passing through every node $v\in U_H\cup V_H$. 
\end{lemma}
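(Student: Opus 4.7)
The plan is to use the standard two-pass dynamic programming idea for counting paths in a layered structure and to show that, although the counts may be exponentially large, each count fits in $O(d\log\Delta)$ bits, so transmitting them across an edge takes only $O(d)$ rounds in \CONGEST and the total round complexity is $O(d^2)$.

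First, I would compute the BFS layers $L_0,L_1,\dots,L_d$ of Section~\ref{sec:det_diameter}, where $L_0\subseteq U_H$ is the set of $M_H$-unmatched nodes in $U_H$ and, in general, $L_i$ is the set of nodes reachable from $L_0$ via a shortest alternating path of length exactly $i$. This is a parallel BFS on the directed graph that orients unmatched edges $U_H\to V_H$ and matched edges $V_H\to U_H$, and it takes $O(d)$ rounds. Each node learns its layer index (or that it lies in no layer $\le d$) and, for every neighbor, whether that neighbor lies one layer above or below in the alternating BFS. Since by assumption $H$ has no augmenting path shorter than $d$, every shortest augmenting path of length $d$ consists of exactly one node from each $L_i$, with $L_0\subseteq U_H$ unmatched and $L_d\subseteq V_H$ unmatched.

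Second, I would compute, for every $v\in L_i$, the quantity
\[
f(v) \;=\; \#\{\text{alternating paths of length } i \text{ from } L_0 \text{ to } v\},
\]
by the forward recurrence $f(v)=\sum_{u} f(u)$, where $u$ ranges over the neighbors of $v$ in $L_{i-1}$ joined to $v$ by an alternating-correct edge (unmatched if $i$ is odd, matched if $i$ is even), with base case $f(v)=1$ for $v\in L_0$. Symmetrically, for $v\in L_i$ I would compute
\[
g(v) \;=\; \#\{\text{alternating paths of length } d-i \text{ from } v \text{ to an unmatched node in } L_d\}
\]
by the backward recurrence from the unmatched nodes of $L_d$. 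Once both $f(v)$ and $g(v)$ are known at $v$, the number of shortest augmenting paths through $v$ is exactly $f(v)\cdot g(v)$, which $v$ computes locally.

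The main obstacle, and the reason the complexity is $O(d^2)$ rather than $O(d)$, is bandwidth. The numbers $f(v),g(v)$ can be as large as $\Delta^{d}$, so they require $O(d\log\Delta)$ bits each, whereas a \CONGEST message carries only $O(\log n)$ bits. To handle this, I would split each count into $O(d\log\Delta/\log n)=O(d)$ chunks of $O(\log n)$ bits and pipeline them across the relevant edge. Because the DP for $f$ has only $d$ layers and each layer transition requires each node in $L_{i-1}$ to send its already-computed $f$-value (in $O(d)$ chunks) to its appropriate neighbors in $L_i$, which then sum the received values locally, the forward pass takes $O(d)\cdot O(d)=O(d^2)$ rounds in total; the backward pass for $g$ is identical. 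This yields the claimed $O(d^2)$-round deterministic \CONGEST algorithm.
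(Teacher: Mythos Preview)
Your proposal is correct and follows essentially the same approach as the paper: compute the BFS layers $L_0,\dots,L_d$, do a forward pass to count alternating paths from $L_0$ to each node, do a backward pass from the unmatched nodes in $L_d$, and observe that the counts are bounded by $\Delta^d$ so each number needs $O(d\log\Delta)$ bits and hence $O(d)$ \CONGEST rounds per layer transition, giving $O(d^2)$ rounds overall. The only cosmetic difference is in the backward pass: you compute $g(v)$ explicitly and set $p_v=f(v)g(v)$, whereas the paper (following \cite{yehuda17}) propagates $p_v$ directly via the ratio formula $p_u=\sum_j p_{v_j}\cdot x_u/x_{v_j}$; these are algebraically equivalent, and your formulation is arguably cleaner since it avoids the division.
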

\begin{proof}
	Recall  that we assume that the bipartition into $U_H$ and $V_H$ is given. The algorithm to compute the numbers consists of two phases. In a first phase, for every node $v\in U_H\cap V_H$, the algorithm computes the number $x_v$ of shortest alternating paths starting at a node in $L_0$ and ending in $v$. This can be done by a simple top-down aggregation algorithm by going over the different layers $L_0$, $L_1$, \dots, $L_d$, one by one. For a node $v\in L_i$, the number of such paths is exactly the sum of these numbers for $v$'s neighbors in layer $L_{i-1}$. For the unmatched nodes $v\in L_d$, this already gives the desired number of augmenting paths of length $d$ containing $v$. For the other nodes, the numbers can now be computed in adding a bottom-up phase, where we go through the layers $L_i$ in reverse order. For some $i<d$, consider some node $u\in L_i$ and the neighbors $v_1,\dots,v_h$ of $u$ in the next layer $L_{i+1}$. For each of the node $v$ let $p_{v}$ be the number of shortest augmenting paths containing $v_i$. Then, in the bottom-up process, the value of $p_u$ can be computed as $p_u = \sum_{i=1}^hp_{v_i}\cdot x_u/x_{v_i}$. In Claim B.5 of \cite{yehuda17}, it is shown that this algorithm computes the correct number $p_v$ for each node $v\in U_H\cup V_H$.
	
	For computing the numbers $p_v$, we only need to do two passes through the levels $L_0,\dots,L_d$. If the nodes could send arbitrarily large messages, this would require $2d$ rounds. Since the graph can have maximum degree $\Delta$, the number of alternating paths of length at most $d$ passing through a node $v$ can be at most $\Delta^d$. In the algorithm, we therefore have to communicate integers between $1$ and $\Delta^d$, and thus numbers that can be represented with $O(d\log \Delta)$ bits. Communicating a single such number might require up to $O(d)$ rounds in the \CONGEST model.
\end{proof}

We can now use this path counting method to find a small set $S$ of nodes that covers all augmenting paths of length $d$. We start with an empty set $C$. The algorithm then works in $O(d\log\Delta)$ phases $i=1,2,3,\dots$, where in phase $i$, we add unmatched nodes $v$ and matching edges $e$ to $C$ such that are still contained in at least $\Delta^d/2^i$ remaining paths. In order to obtain a polylogarithmic running time, we need to add nodes and edges to $C$ in parallel.  In order to make sure that we do not cover the same path twice, when adding nodes and edges in parallel, we essentially iterate through the $d$ levels in each phase. The details of the algorithm are given in the following.

\begin{center}
	\begin{minipage}{1.0\linewidth}
		\vspace{-8pt}
		\begin{mdframed}[hidealllines=false, backgroundcolor=gray!10]
			\textbf{Covering Paths of Length \boldmath$d$: Phase $i\geq 1$}\\[1mm]
			Iterate over all odd levels $\ell=1,3,\dots,d$:
			\vspace*{-2mm}
			\begin{enumerate}
				\item Count the number of augmenting paths of length $d$ passing through each of the remaining nodes and edges.
				\item If $\ell\in\set{1,d}$, for all remaining nodes $v\in L_\ell$ that are in $p_v\geq \Delta^d/2^i$ different augmenting paths of length $d$, add $v$ to $C$ and remove $v$ and its incident edges from $G_H$ for the remainder of the algorithm.
				\item If $\ell\in\set{2,\dots,d-1}$, for all remaining matching edges $e\in M_H$ connecting two nodes $u\in L_{\ell-1}$ and $v\in L_{\ell}$ that are in $p_e\geq \Delta^d/2^i$ different augmenting paths of length $d$, add $e$ to $C$ and remove $e$ and its incident edges from $G_H$ for the remainder of the algorithm.
			\end{enumerate}
			Define $S_H$ to contain every node in $C$ and both nodes of every edge in $C$.
		\end{mdframed}
	\end{minipage}
	\vspace{-8pt}
\end{center}

\smallskip

\begin{lemma}\label{lemma:onephase}
	Let $\delta\in (0,1)$ and assume that $M_H$ is a $(1-\delta)$-approximate matching of the bipartite graph $H$ of maximum degree at most $\Delta$ . Then, the set $S_H$ selected by the above algorithm has size at most $ \alpha_d \delta  \cdot\OPT_H$, where $\alpha_d=2(d+3)(1+d \ln\Delta)$ and $\OPT_H$ is the size of a maximum matching and thus of a minimum vertex cover of $H$. The time complexity of the algorithm in the \CONGEST model is $O(d^4\log \Delta)$.
\end{lemma}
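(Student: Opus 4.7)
The proof naturally splits into a round-complexity bound and an approximation bound.

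For the round complexity, I would simply compose the ingredients: the algorithm has $O(d\log\Delta)$ phases, each iterating over $O(d)$ odd levels, and each iteration first invokes the path-counting procedure of Lemma~\ref{lemma:pathcounting} at cost $O(d^2)$ rounds before performing only local decisions. Multiplying out gives $O(d\log\Delta)\cdot O(d)\cdot O(d^2)=O(d^4\log\Delta)$ rounds.

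For the approximation, my plan is to view the procedure as a parallel thresholded greedy for an implicit set-cover instance whose ground set $\mathcal{P}$ is the family of length-$d$ augmenting paths w.r.t.\ $M_H$ and whose covering elements are the unmatched nodes of $L_0\cup L_d$ together with the matching edges of $M_H$ connecting consecutive layers $L_{\ell-1},L_\ell$ for internal odd $\ell$. Since $H$ has no augmenting path of length below $d$, every $P\in\mathcal{P}$ contains exactly one vertex per layer and is hit by exactly $(d+3)/2$ elements: its two unmatched endpoints plus its $(d-1)/2$ matching edges.

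The key structural step is to bound the optimum $\tau^*$ of this set-cover instance. Let $\nu$ be the maximum number of vertex-disjoint length-$d$ augmenting paths; simultaneous augmentation along any such family strictly enlarges the matching, so $\nu\le \OPT_H-|M_H|\le \delta\,\OPT_H$. Fix a maximum disjoint family $\mathcal{A}$ and let $C^*$ collect the two unmatched endpoints and the $(d-1)/2$ matching edges of every $A\in\mathcal{A}$, so that $|C^*|\le \frac{d+3}{2}\nu$. By maximality, any $P\in\mathcal{P}$ shares some vertex $v$ with some $A\in\mathcal{A}$: if $v$ is an unmatched endpoint of $A$ then $v\in C^*$, otherwise $v$'s $M_H$-partner lies on both $A$ and $P$ (since both alternating paths must follow $v$'s unique matching edge), so that matching edge lies in $C^*$. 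Hence $\tau^*\le \frac{d+3}{2}\delta\,\OPT_H$.

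Finally, I would apply a dual-fitting analysis to the parallel thresholded greedy. Every set $P_v$ or $P_e$ has size at most $\Delta^d$, since after fixing $v$ or $e$ the remainder of a length-$d$ augmenting path is determined by at most $d$ branching choices of degree at most $\Delta$. The $i$-th phase picks, in parallel, every remaining element still covering $\ge \Delta^d/2^i$ uncovered paths; sweeping through the odd levels inside one phase ensures that along any single path the simultaneously picked elements charge different positions. A standard dual-fitting argument then gives $|C|\le 2(1+d\ln\Delta)\,\tau^*\le (d+3)(1+d\ln\Delta)\,\delta\,\OPT_H$, and since $S_H$ takes both endpoints of each matching edge in $C$ (and each chosen unmatched node once) we obtain $|S_H|\le 2|C|\le 2(d+3)(1+d\ln\Delta)\,\delta\,\OPT_H=\alpha_d\,\delta\,\OPT_H$. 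The hardest step is the last one: rigorously bounding the constant-factor loss from the parallel, thresholded and level-sweeping nature of the selection, i.e.\ showing that inside one phase no path is charged against too many simultaneously picked elements, so that the harmonic $(1+d\ln\Delta)$-approximation of sequential greedy is preserved up to the stated factor of two.
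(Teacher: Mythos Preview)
Your proposal is correct and follows essentially the same route as the paper: round complexity by composing the three nested costs, the set-cover view with elements being unmatched endpoints and matching edges, the upper bound $\tau^*\le \frac{d+3}{2}\delta\,\OPT_H$ via a maximal family of vertex-disjoint length-$d$ augmenting paths, and the $2(1+d\ln\Delta)$ greedy approximation factor followed by the factor-$2$ blowup from edges to nodes.

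One small sharpening worth adopting: in the step you flag as hardest, the paper's observation is simply that any two distinct elements on the \emph{same} odd level cover \emph{disjoint} sets of length-$d$ augmenting paths, since each such path has exactly one vertex (hence one element) at each level. Thus the parallel picks within a single level are literally equivalent to sequential picks in arbitrary order, and combined with the invariant that at the start of phase $i$ every count is at most $\Delta^d/2^{i-1}$, the whole algorithm simulates a sequential greedy that always chooses a set covering at least half the current maximum. This makes the $2(1+d\ln\Delta)$ bound immediate from the standard relaxed-greedy analysis, without needing a separate dual-fitting argument or a per-path charging bound.
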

\begin{proof}
	We first look at the time complexity of the algorithm in the
	\CONGEST model. The algorithm consists of $O(d\log\Delta)$ phases,
	in each phase, we iterate over $O(d)$ levels and in each of these
	iterations, the most expensive step is to count the number of
	augmenting paths passing through each node and edge. By Lemma
	\ref{lemma:pathcounting}, this can be done in time $O(d^2)$,
	resulting in an overall time complexity of $O(d^4\log \Delta)$.
	
	For each free node $v\in U_H\cup V_H$ and for each matching edge
	$e\in M_H$, let $p_v$ and $p_e$ be the number of (uncovered) augmenting paths of
	length $d$ passing through $v$ and $e$, respectively. We will next
	show that our algorithm is simulating a version of the standard
	sequential greedy set cover algorithm. When applying the sequential
	greedy algorithm, in each step, we would need to choose a set $P_v$
	or $P_e$ of paths that maximizes the number of uncovered augmenting
	paths of length $d$ the set covers. We will see that we
	essentially relax the greedy step and we obtain an algorithm that is
	equivalent to a sequential algorithm that always picks a set of paths that
	contains at least half as many uncovered paths as possible. To show
	this, we first show that for each phase $i$, at the beginning of the
	phase, we have $p_v,p_e\leq \Delta^d/2^{i-1}$ for all unmatched nodes $v$ and
	matching edges $e$. For the sake of contradiction, assume that this is not
	the case and let $i'$ be the first phase, in which it is not true.
	Because every node and edge can be contained in at most $\Delta^d$
	augmenting paths of length $d$, the statement is definitely true
	for the first phase and we therefore have $i'>1$. We now consider
	phase $i'-1$. In each phase, by iterating over all odd levels
	$\ell=1,3,\dots,d$, we iterate over all unmatched nodes $v\in
	U_H\cup V_H$ and all matching edges $e\in M_H$ that are contained in
	some augmenting path of length $d$. For each of them, we add the
	corresponding set $P_v$ or $P_e$ to the set cover if we still have
	$p_v\geq \Delta/{2^{i'-1}}$ or $p_e\geq \Delta/2^{i'-1}$. At the end
	of phase $i'-1$, we therefore definitely have $p_v,p_e<
	\Delta/2^{i'-1}$ for all nodes $v$ and matching edges $e$, which
	contradicts the assumption that at the beginning of phase $i'$, it
	is not true that $p_v,p_e\leq
	\Delta/2^{i'-1}$ for all such $v$ and $e$. Because in each phase
	$i$, we only add set $P_v$ and $P_e$ that are contained in at least
	$\Delta/2^{i}$ uncovered paths, we clearly always pick sets that
	cover at least half as many uncovered paths as the best current set.
	Note also that because we iterate through the levels and only add
	sets for nodes or edges on the same level in parallel, the set that
	we add in parallel cover disjoint sets of paths. The algorithm is
	therefore equivalent to a sequential algorithm that adds the sets in
	each parallel step in an arbitrary order.
	
	Now, we will show that we remove at most
	$2(d+3)(1+d\ln\Delta) \delta \cdot \OPT_H$ nodes from graph $H$.
	Indeed, approximating the set cover problem using the standard
	greedy algorithm gives a $(1+\ln(s))$ approximation to the solution,
	where $s$ is the cardinality of the largest set. If we
	relax the greedy step by at least a factor of two, as our algorithm
	does, a standard analysis implies that we still get a $2(1+\ln
	s)$-approximation of the corresponding 
	minimum set cover problem, where $s$ is still defined as the cardinality of
	the largest set. In our case, the largest set $P_v$ or $P_e$ is $s\leq\Delta^d$. Now if
	the solution to the set cover problem using this greedy version
	algorithm is $S_H$ and the optimal solution of the set cover problem
	is $S^*$, then $|S^*| \leq |S_H| \leq 2 (1+d\ln\Delta)|S^*|$. Recall
	that $P_e$ corresponds to a matched edge and by step 3 in our
	algorithm, both of these matched nodes are removed from the graph $H$. Hence, we
	remove up to 
	$2|S_H| \leq 4 (1+d \ln\Delta)|S^*|$ nodes from $H$. 
	
	Next, we give an upper bound to $|S^*|$, which will finish up our
	proof. Recall that a solution to our set cover problem is a
	set of matched edges $S_e$ and a set of unmatched nodes $S_v$ that
	cover all augmenting paths of length $d$ in $H$, i.e., all paths in $\mathcal{P}$.
	Luckily, there is a simple solution to the given set cover problem
	that allows us to upper bound $|S^*|$. We just select a maximal set $P$ of vertex-disjoint
	augmenting paths of length $d$ and we consider all the unmatched nodes
	and matched edges on these paths to be our solution $S'$, where
	$|S'|= \frac{d+3}{2}|P|$. Clearly, $S'$ is a set cover (and thus
	$|S^*|\leq |S'|$), as otherwise there would be an augmenting path of length $d$ that is not covered by $S'$. This path has to be vertex-disjoint from all the paths in $P$, which is a contradiction to the assumption that $P$ is a maximal set of vertex-disjoint augmenting paths of length $d$.  Let
	$|M_H^*|$ denote the maximum cardinality of a matching of graph $H$.
	Now, since $M_H$ is a $(1-\delta)$-approximate matching, we can
	clearly have at
	most $\delta |M_H^*|$ vertex-disjoint augmenting paths of at most length $d$. Hence, the size of $P$ can never exceed
	$\delta |M_H^*|$ i.e. $|P| \leq \delta |M_H^*| $. Thus,
	$|S^*|\leq |S'| \leq \frac{d+3}{2} \delta |M_H^*| $. Hence, we
	remove at most
	$2 |S_H| \leq 4 (1+d \ln\Delta)|S'| \leq 4 (1+d \ln\Delta)
	\frac{d+3}{2} \delta |M_H^*| \leq 2(d+3)(1+d\ln\Delta)
	\delta |M_H^*|= 2(d+3)(1+d\ln\Delta) \delta \cdot \OPT_H $
	nodes from graph $H$.
\end{proof}

By iterating over the lengths of shortest paths, we now directly get the following lemma.

\begin{lemma}\label{lemma:S1bound}
	Let $G=(U\cup V, E)$ be a bipartite graph, let $k\geq 1$ be an integer parameter, and assume that $M$ is a $(1-\delta)$-approximate matching of $G$ for some $\delta\in[0,1]$. Further, let $\OPT$ be the size of a minimum vertex cover of $G$. If the bipartition of the nodes of $G$ into $U$ and $V$ is given, there is an $O(k^5\log \Delta)$-time algorithm to compute a node set $S_1 \subseteq U\cup V$ of size at most $4k(k+1)(1+2k\ln \Delta)\delta\cdot\OPT$ such that in the induced subgraph $G[U\cup V\setminus S]$, there is no augmenting path of length at most $2k-1$ w.r.t.\ the matching $\bar{M}$, where $\bar{M}\subseteq M$ consists of the edges of $M$ that connect two nodes in $U\cup V\setminus S_1$.
\end{lemma}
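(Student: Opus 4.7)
The plan is to iterate the subroutine of Lemma~\ref{lemma:onephase} over the odd path-lengths $d_1 := 1, d_2 := 3, \ldots, d_k := 2k-1$. Concretely, I would set $G_1 := G$, $\bar{M}_1 := M$, $S := \emptyset$, and for $i = 1, \ldots, k$, invoke Lemma~\ref{lemma:onephase} on $H := G_i$ with matching $M_H := \bar{M}_i$ and parameter $d := d_i$ to obtain a node set $T_i$; then update $S \leftarrow S \cup T_i$, let $G_{i+1}$ be the subgraph of $G_i$ induced on $V(G_i) \setminus T_i$, and let $\bar{M}_{i+1}$ be the restriction of $\bar{M}_i$ to $G_{i+1}$. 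Finally, I would output $S_1 := S$ and $\bar{M} := \bar{M}_{k+1}$. The bipartition of $G$ needed to run Lemma~\ref{lemma:onephase} at each stage is available throughout since it is given as part of the input.

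Correctness hinges on the loop invariant, provable by induction on $i$, that at the start of stage $i$ the graph $G_i$ has no augmenting path of length less than $d_i$ w.r.t.\ $\bar{M}_i$. The key structural observation is that Lemma~\ref{lemma:onephase} is designed to remove matched vertices only in pairs (a matched vertex together with its partner), so every unmatched vertex of $G_{i+1}$ w.r.t.\ $\bar{M}_{i+1}$ was already unmatched in $G_i$ w.r.t.\ $\bar{M}_i$. This implies that any augmenting path of $G_{i+1}$ is an augmenting path of $G_i$, and inductively of $G$ w.r.t.\ the original $M$. Combined with the conclusion of the lemma---that all length-$d_i$ augmenting paths in $G_i$ are eliminated in stage $i$---this gives the invariant at stage $i+1$, and after stage $k$ the induced subgraph on $V \setminus S_1$ has no augmenting path of length at most $2k-1$, as required.

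For the size bound I would exploit exactly this path-injection property: any family of vertex-disjoint length-$d_i$ augmenting paths in $G_i$ w.r.t.\ $\bar{M}_i$ is also a family of vertex-disjoint augmenting paths in $G$ w.r.t.\ $M$, and hence has cardinality at most $|M^*| - |M| \leq \delta |M^*| = \delta \cdot \OPT$. Substituting this bound in place of $\delta |M_H^*|$ in the analysis of Lemma~\ref{lemma:onephase} yields $|T_i| \leq 2(d_i+3)(1+d_i \ln\Delta)\,\delta \cdot \OPT \leq 4(k+1)(1+2k\ln\Delta)\,\delta\cdot\OPT$; summing over the $k$ stages gives $|S_1| \leq 4k(k+1)(1+2k\ln\Delta)\,\delta\cdot\OPT$. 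The running time bound is routine: each stage costs $O(d_i^4 \log \Delta) = O(k^4 \log \Delta)$ rounds by Lemma~\ref{lemma:onephase}, summing to $O(k^5 \log \Delta)$ over the $k$ stages. The main delicate point, which I expect to be the trickiest part of a careful write-up, is precisely the replacement of the local quantity $\delta |M_H^*|$ by the global $\delta \cdot \OPT$ at every stage; without the path-injection observation above, the approximation ratio of $\bar{M}_i$ relative to a maximum matching of $G_i$ could in principle degrade uncontrollably as nodes are deleted, and one could not simply reapply Lemma~\ref{lemma:onephase} with the original parameter $\delta$.
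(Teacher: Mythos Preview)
Your proposal is correct and follows essentially the same approach as the paper: iterate Lemma~\ref{lemma:onephase} over $d=1,3,\dots,2k-1$, maintain the invariant that no shorter augmenting paths exist at the start of each stage using the ``matched vertices are removed in pairs'' observation, and sum the per-stage bounds $\alpha_{2i-1}\delta\cdot\OPT$ to obtain the stated size and time bounds.

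In fact, your write-up is more careful than the paper's on exactly the point you flag as delicate. The paper simply invokes Lemma~\ref{lemma:onephase} at each stage and asserts $|F_i|\leq \alpha_{2i-1}\delta\cdot\OPT$, but Lemma~\ref{lemma:onephase} as stated bounds $|S_H|$ by $\alpha_d\delta\cdot\OPT_H$ under the hypothesis that $M_H$ is a $(1-\delta)$-approximate matching of $H$; it is not argued that $\bar{M}_i$ retains approximation ratio $(1-\delta)$ inside $G_i$. Your path-injection argument---that every family of vertex-disjoint augmenting paths in $G_i$ w.r.t.\ $\bar{M}_i$ is also such a family in $G$ w.r.t.\ $M$, hence has cardinality at most $|M^*|-|M|\leq\delta\cdot\OPT$---is exactly what is needed to plug into the analysis of Lemma~\ref{lemma:onephase} at the point where the bound $|P|\leq\delta|M_H^*|$ is used, and it cleanly yields the global bound the paper states.
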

\begin{proof}
	To compute $S_1$ we start with an empty set $S_1$, and we add nodes to $S_1$ in $k$ stages $i=1,\dots,k$. Let $F_i$ be the set of nodes that are added to $S_1$ in stage $i$. We at all time use $\bar{M}\subseteq M$ to denote the set of edges in $M$ that connect two nodes in $U\cup V\setminus S_1$. We will show that at the beginning of each stage $i$, the shortest augmenting path length in $G[U\cup V\setminus S_1]$ w.r.t.\ matching $\bar{M}$ is at most $2i-1$. The node set $F_i$ of stage $i$ is then selected by using the above algorithm applied to the graph $H=G[U\cup V \setminus S_1]$ with the current matching $\bar{M}$ and $d=2i-1$. To be able to apply the above algorithm in stage $i$, we need to make sure that at the beginning of each stage $i$, the shortest augmenting path length is at least $2i-1$. For contradiction, assume that this is not the case and assume that $i$ is the first stage, where there is an augmenting path of length less than $2i-1$ at the beginning of stage $i$. For $i=1$, the shortest augmenting path length is clearly at least $2i-1=1$ and we can therefore assume that $i>1$. Suppose at the beginning of stage $i$, there exists an augmenting path  in $G[U\cup V\setminus S]$ w.r.t.\ matching $\bar{M}$ of length $2j-1$ for some $j<1$. Let us focus on the two unmatched nodes at the ends of this path. If both these nodes were already unmatched at the beginning of stage $j$, the algorithm of stage $j$ would add at least one node of the path to $F_j$ and thus to $S_1$. This therefore is a contradiction to the assumption that all the nodes of the path are still present at the beginning of stage $i$. We therefore know that at least one of the two unmatched nodes of the path must have been matched at the beginning of the algorithm. However, this also cannot be because the algorithm guarantees that whenever we add one node of a matching edge to $S_1$, then we also add the other node of this matching edge to $S_1$ at the same time. The algorithm can therefore never create unmatched nodes. We therefore know that at the beginning of each stage $i$, the shortest augmenting path length is at least $2i-1$ and we can therefore apply Lemma  \ref{lemma:onephase} to select a set of nodes $F_1$ to cover all augmenting paths of length $2i-1$ in stage $i$. By Lemma  \ref{lemma:onephase}, there is an $O(i^4\log \Delta)$-round \CONGEST algorithm to compute the set $F_i$ of nodes that are added to $S_1$ in stage $i$, such that $|F_i|\leq \alpha_d \delta \cdot\OPT$. And after adding $F_i$ to $S_1$, the graph $G[U\cup V\setminus S_1]$ has no augmenting paths of length at most $2i-1$ w.r.t.\ $\bar{M}$. For the size of $S_1$ after all $k$ stages, we therefore obtain
	\begin{eqnarray*}
		\label{eq:1}
		|S_1| & \leq & \delta \cdot \OPT \cdot \sum_{i=1}^{k} \alpha_{2i-1}\\
		& = & \delta \cdot \OPT \cdot \sum_{i=1}^{k} 2(2i-1+3)(1+(2i-1) \ln\Delta)\\
		& \leq & 4k(k+1)(1+2k\ln \Delta) \cdot \delta \cdot \OPT.
	\end{eqnarray*}
	By Lemma  \ref{lemma:onephase}, the overall running time of all $k$ stages is $k\cdot O(k^4\log \Delta)=O(k^5\log \Delta)$.
\end{proof}

We now have everything that we need to also get a deterministic $O\big(D+\poly\big(\frac{\log n}{\eps}\big)\big)$-time \CONGEST algorithm for computing a $(1+\eps)$-approximate solution for the MVC problem in bipartite graphs.

\begin{theorem}\label{approxMVC:det}
	Let $G=(V,E)$ be a bipartite graph of diameter $D$ and maximum
	degree $\Delta$ and let $G'=(V',E')$ be a subgraph of $G$. For
	$\epsilon\in (0,1]$, there is a deterministic algorithm that
	gives a $(1+\epsilon)$-approximate minimum vertex cover of
	graph $G'$ in $O(D+ \frac{log^4 n}{\epsilon^8})$ rounds in the \CONGEST model on $G$. 
\end{theorem}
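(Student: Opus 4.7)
The plan is to combine three ingredients that are already essentially in place: the deterministic approximate matching algorithm of Ahmadi et al., Lemma~\ref{lemma:S1bound} for removing nodes to eliminate short augmenting paths, and Lemma~\ref{lemma:diameter1} for converting a matching with no short augmenting paths into an approximate vertex cover. I would set $k = \lceil 2/\eps \rceil$ (so that $1+1/k \leq 1+\eps/2$) and choose $\delta$ small enough that the bound $4k(k+1)(1+2k\ln\Delta)\delta$ from Lemma~\ref{lemma:S1bound} is at most $\eps/2$; since $\Delta \leq n$, this forces $\delta = \Theta\bigl(\eps^4/\log n\bigr)$.

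The algorithm then proceeds in four steps. First, elect a leader and build a BFS tree of $G$ in $O(D)$ rounds, then use it to broadcast the bipartition of $V$ into the two sides, inducing a bipartition of $V'$. Second, invoke the deterministic \CONGEST matching algorithm of Ahmadi et al.~\cite{ahmadi18} on $G'$ with approximation parameter $\delta$, yielding a $(1-\delta)$-approximate matching $M$ of $G'$. Third, apply Lemma~\ref{lemma:S1bound} with parameters $k$ and $\delta$ to obtain a set $S_1 \subseteq V'$ with $|S_1| \leq (\eps/2)\OPT$ such that $G'' := G'[V'\setminus S_1]$ has no augmenting paths of length at most $2k-1$ w.r.t.\ the induced matching $\bar M := M\cap E(G'')$. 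Fourth, apply Lemma~\ref{lemma:diameter1} to the subgraph $G''$ of $G$ (with $\bar M$ as the input matching and parameter $k$) to obtain a vertex cover $S_2$ of $G''$ with $|S_2| \leq (1+1/k)\cdot \OPT_{G''}$ in $O(D+k)$ rounds. Output $C := S_1 \cup S_2$.

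For correctness, $C$ is a vertex cover of $G'$ since every edge of $G'$ either has an endpoint in $S_1$ or lies in $G''$, where it is covered by $S_2$. For the approximation, since any minimum vertex cover of $G'$ restricted to $V' \setminus S_1$ covers all edges of $G''$, we have $\OPT_{G''} \leq \OPT$, and hence
\[
|C| \leq |S_1| + |S_2| \leq \tfrac{\eps}{2}\OPT + \Bigl(1+\tfrac{1}{k}\Bigr)\OPT \leq \Bigl(1 + \tfrac{\eps}{2} + \tfrac{\eps}{2}\Bigr)\OPT = (1+\eps)\OPT.
\]

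The time is the sum of the four steps: $O(D)$ for the BFS and bipartition, the complexity of the Ahmadi et al.\ matching routine for the chosen $\delta$, $O(k^5 \log\Delta) = O(\log n / \eps^5)$ from Lemma~\ref{lemma:S1bound}, and $O(D+k)$ from Lemma~\ref{lemma:diameter1}. The main obstacle, and the reason the stated bound is $O(D+\log^4 n/\eps^8)$ rather than something smaller, is the balancing act in choosing $\delta$: Lemma~\ref{lemma:S1bound} forces $\delta$ as small as $\Theta(\eps^4/\log n)$ to absorb the $k^3\log\Delta$ overhead, and the deterministic matching algorithm's round complexity grows polynomially in $\log n$ and $1/\delta$, so it dominates. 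Plugging $\delta = \Theta(\eps^4/\log n)$ into the $\poly(\log n/\delta)$ bound of Ahmadi et al.\ and carefully tracking the exponents yields the claimed $O(D + \log^4 n/\eps^8)$ round complexity.
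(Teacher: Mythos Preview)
Your proposal is correct and follows essentially the same approach as the paper's proof: both combine the deterministic $(1-\delta)$-approximate matching of Ahmadi et al., Lemma~\ref{lemma:S1bound} to delete a small set $S_1$ and eliminate augmenting paths of length at most $2k-1$ with $k=\lceil 2/\eps\rceil$, and Lemma~\ref{lemma:diameter1} on the remaining graph, with the same choice $\delta=\Theta(\eps/(k^3\log\Delta))=\Theta(\eps^4/\log n)$. The paper is slightly more explicit about the Ahmadi et al.\ round complexity, quoting it as $O\big(\frac{\log^2\Delta+\log^* n}{\delta}+\frac{\log\Delta}{\delta^2}\big)=O\big(\frac{\log^2 n}{\delta^2}\big)$, which with this $\delta$ gives the stated $O(\log^4 n/\eps^8)$ and indeed dominates; you should make this substitution explicit rather than appealing to ``carefully tracking the exponents.''
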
 
\begin{proof}
	As a first step, we choose a sufficiently small parameter $\delta>0$
	and we compute a $(1-\delta)$-approximate solution $M'$ to the maximum
	matching problem on $G'$ by using the deterministic \CONGEST algorithm of \cite{ahmadi18}.
	For computing such a matching, the algorithm of \cite{ahmadi18}
	has a time complexity of $O\big(\frac{\log^2\Delta +\log^*
		n}{\delta} + \frac{\log\Delta}{\delta^2}\big)=O\big(\frac{\log^2
		n}{\delta^2}\big)$. Let $k':=\lceil 2/\eps\rceil$ as discussed above. By Lemma \ref{lemma:S1bound}, there is a value
	$\alpha=4k'(k'+1)(1+2k'\ln\Delta)= O(k'^3\log\Delta)$ such that we can find a set
	$S_1\subseteq V'$ of size $|S_1|=\alpha\delta\OPT$, where $\OPT$ is
	the size of a minimum vertex cover of $G'$, such that the
	following is true. The set $S_1$ can be computed in time
	$O(k'^5\log\Delta)=O\big(\frac{\log n}{\eps^5}\big)$.
	Let $G''=G'[V'\setminus S_1]$ be the induced
	subgraph of $G'$ after removing all the nodes in $S_1$ and let
	$M''$ be the subset of the edges in $M'$ that connect two nodes
	in $V'\setminus S_1$ (i.e., $M''$ is a matching of $G''$). Then, the
	graph $G''$ has no augmenting paths of length at most $2k'-1$. By
	using Lemma \ref{lemma:diameter1}, we can therefore compute a
	$(1+1/k')$-approximate vertex cover $S_2$ (and thus a $(1+\eps/2)$-approximate
	vertex cover) of $G''$ in time $O(D+k')=O(D+1/\eps)$. Because a
	minimum vertex cover of $G''$ is clearly not larger than a minimum
	vertex cover of $G'$, we therefore have $|S_2|\leq
	(1+\eps/2)\cdot\OPT$. Note that
	$S_1\cup S_2$ is a vertex cover of $G'$. The size of $S_1\cup S_2$
	can be bounded as $|S_1\cup S_2| \leq \delta\alpha\cdot\OPT +
	(1+\eps/2)\cdot\OPT$. In order to make sure that this is at most
	$(1+\eps)\cdot\OPT$, we have to choose $\delta\leq \eps/(2\alpha)$.
	The time complexity to compute the initial matching $M'$ of $G'$ is
	therefore
	$O\big(\frac{\log^2 n}{\delta^2}\big)=O\big(\frac{\log^4
		n}{\eps^8}\big)$. 
\end{proof}

\section{Polylogarithmic-Time Algorithms}
\label{sec:polylog}

We next show how we can use the algorithms of the previous section
together with existing low-diameter graph clustering techniques to
obtain polylogarithmic-time approximation schemes for the minimum
vertex cover algorithm in the \CONGEST model. First we describe a
general framework for achieving a $(1+\epsilon)$-approximate minimum
vertex cover $C$ of unweighted bipartite graphs via an efficient
algorithm in the \CONGEST model based on a given clustering with some
specific properties (cf. Section ~\ref{sec:model} for the corresponding
definitions). We will do so by proving the following lemma. Note that
our general framework applies to both the randomized and the
deterministic case. 

\begin{lemma} 
	\label{approxMVC:givenclustering} Let $G=(V,E)$ be a
	bipartite graph and assume that we are given a maximal matching $M$
	of $G$. We define edge weights $w(e)\in \{0,1\}$ such that $w(e)=1$
	if and only if $e \in M$. Further, assume that w.r.t.\ those edge
	weights, we are given a $(1-\eta)$ dense, $3$-hop separated, and
	$(c,d)$-routable clustering of $G$, for some $\eta\in(0,1]$ and some
	positive integers $c,d>0$. Then, for any $\psi \in(0,1]$, we can
	find a $(1+2\eta+\psi)$-approximate minimum vertex cover by a
	deterministic \CONGEST algorithm in
	$O\big(c \cdot \big(d+\poly \frac{\log n}{\psi}\big)\big)$ rounds
	and by a randomized \CONGEST algorithm in
	$O\big(c \cdot \big(d+\frac{\log n}{\psi^3}\big)\big)$ rounds,
	w.h.p.  
\end{lemma}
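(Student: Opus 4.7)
The plan is to split the cover into a small ``trivial'' part that handles edges poorly served by the clustering, together with the output of the Section~\ref{sec:diameter} algorithm run independently inside every cluster. Write $V_{\mathrm{in}} := S_1 \cup \cdots \cup S_k$, $V_{\mathrm{out}} := V \setminus V_{\mathrm{in}}$, and let $B$ be the set of matched nodes in $V_{\mathrm{out}}$. For each $i$, let $U_i \subseteq V_{\mathrm{out}}$ be the set of unmatched $V_{\mathrm{out}}$-nodes that have at least one neighbor in $S_i$. Since the clustering is $3$-hop separated, every unmatched $V_{\mathrm{out}}$-node has neighbors in at most one cluster, so the sets $U_i$ are pairwise disjoint and disjoint from all $S_j$. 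Define the bipartite subgraph $G_i^\star$ on vertex set $S_i \cup U_i$ whose edges are $E(G[S_i]) \cup \{\{u,v\} : u \in U_i,\ v \in S_i,\ \{u,v\} \in E\}$.

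The algorithm first adds $B$ to the output cover $C$ and, in parallel for every $i$, runs the algorithm of Theorem~\ref{approxMVC:rand} (respectively Theorem~\ref{approxMVC:det}) on $G_i^\star$ with parameter $\psi$, using $T_i$ as the local BFS-tree substitute. The $U_i$-nodes attach as leaves of $T_i$ via any of their $S_i$-neighbors, which increases the tree height by at most one and does not increase the edge congestion. Let $VC_i$ denote the resulting cluster cover and return $C := B \cup \bigcup_i VC_i$.

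For correctness, by maximality of $M$ every edge $e \in E$ has a matched endpoint; an edge not contained in any $G[S_i]$ must, by $3$-hop separation, have at least one endpoint in $V_{\mathrm{out}}$, which is either matched (hence in $B$) or unmatched (and then $e$ lies in the unique $G_i^\star$ whose $S_i$ contains $e$'s other endpoint), so $C$ covers $E$. For the approximation factor, density together with $3$-hop separation gives $|M \setminus \bigcup_i E(G[S_i])| \leq \eta |M|$, and since every matched $V_{\mathrm{out}}$-node is an endpoint of an edge in this set, $|B| \leq 2\eta|M| \leq 2\eta\,\OPT$ (using $|M| \leq |M^*| = \OPT$). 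Moreover, for any optimal cover $C^*$ of $G$, each intersection $C^* \cap (S_i \cup U_i)$ is itself a vertex cover of $G_i^\star$, and because the sets $S_i \cup U_i$ are pairwise disjoint, $\sum_i \OPT(G_i^\star) \leq |C^*| = \OPT$. Hence $\sum_i |VC_i| \leq (1+\psi)\,\OPT$ and $|C| \leq (1+2\eta+\psi)\,\OPT$.

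For the round complexity, once each node has locally identified its membership in $V_{\mathrm{out}}$, $B$, or $U_i$ (one round of communication with its neighbors suffices), the cluster-level algorithms need only communicate over the edges of $G_i^\star$ and along the tree $T_i$. Since $T_i$ has height $O(d)$, the algorithm of Theorem~\ref{approxMVC:rand} takes $O(d + \log n/\psi^3)$ rounds per cluster and that of Theorem~\ref{approxMVC:det} takes $O(d + \poly(\log n/\psi))$ rounds per cluster; $(c,d)$-routability bounds the congestion of running all clusters in parallel by $c$, giving the two claimed bounds. The main obstacle to pin down is exactly this last step: verifying that the diameter parameter $D$ in Theorems~\ref{approxMVC:rand} and \ref{approxMVC:det} can genuinely be replaced by the tree height $d$ when the algorithm is confined to a single cluster, i.e.\ that leader election, bipartition, sum aggregation, broadcast of $i^*$, and the Hopcroft--Karp-style matching approximation can all be carried out using only $T_i$ together with the edges of $G_i^\star$.
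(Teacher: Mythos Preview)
Your proof is correct and follows essentially the same approach as the paper: cover edges touching matched outside-cluster nodes trivially, extend each cluster by one hop to absorb the remaining boundary edges, and run the Section~\ref{sec:diameter} algorithm inside each extended cluster using the routing tree for global coordination. The only cosmetic difference is that you extend $S_i$ by the \emph{unmatched} outside neighbors $U_i$ only, whereas the paper extends by all outside neighbors; both are fine since matched outside nodes are already in $B$. Your flagged obstacle is resolved exactly as in the paper: one takes as the ``ambient'' graph for Theorems~\ref{approxMVC:rand}/\ref{approxMVC:det} the union $G_i := G_i^\star \cup T_i$ (a bipartite subgraph of $G$ of diameter $O(d)$), so the $D$ in those theorems becomes $O(d)$, while the matching subroutines of \cite{lotker15,ahmadi18} run on $G_i^\star$ alone and have no diameter dependence.
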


\begin{proof}
	Let $\{S_1,S_2,...,S_t\}$ be the collection of clusters
	of the given $3$-hop separated, $(1-\eta)$-dense clustering. Define
	$E'$ to be the set of edges for which both endpoints are located
	outside clusters and let $E''$ to be the set of edges where exactly
	one of the endpoints is outside clusters. We also say that $e$
	is an edge outside clusters if it is in $E' \cup E''$. Further,
	let $X$ to be the set of all matched nodes (w.r.t.
	the given maximal matching $M$) that are outside
	clusters. Note that since $M$ is a maximal matching, any
	edge in $E'$ is necessarily incident to at least one matched node of
	$M$.  Therefore, when adding the set $X$ to the vertex cover
	$C$, we cover all edges in $E'$ and
	possibly some extra edges in $E''$. Now since $G$ is
	$(1-\eta)$-dense, then at most $\eta |M|$ matched edges are outside
	clusters, and when assuming that $|M^*|$ is the size of a maximum
	matching of $G$, we can deduce that
	$|X| \leq 2\eta|M| \leq 2\eta |M^*|= 2\eta \OPT$, where $\OPT$ is
	the size of a minimum vertex cover of $G$. Next, we extend
	each cluster $S_i$ by at most one hop in radius as follows. For
	every edge $\set{u,v}\in E''$ such that $u\in S_i$ and $v\not\in
	S_i$, we add the edge $\set{u,v}$ and node $v$ to the cluster.
	Let $\{S'_1,S'_2,...,S'_t\}$ be the new collection of extended
	clusters. All edges of $G$ that are not already covered by $X$ are
	now inside some cluster. In addition, we grow the height of each
	cluster tree $T_i$ by at most one hop so that they include the new
	cluster nodes. We denote the new extended trees by
	$T'_i$. Note that clearly, each edge in $E$ is still in at most $c$
	trees. Hence, the new collection of extended clusters are now $2$-hop
	separated 
	and $(c,d+1)$-routable.
	
	For each cluster $S_i'$, let $G_i'$ be the graph consisting of the
	nodes and edges of the cluster. We note that because the clusters
	are $1$-hop separated, the graphs $G_i'$ are vertex and edge
	disjoint. In addition, for each cluster $S_i'$, we define the graph
	$G_i$ as the union of $G_i'$ and the tree $T_i'$. Because the
	clustering is $(c,d+1)$-routable, it follows that every edge of $G$
	is used by at most $c$ of the graph $G_i$ and that the diameter of
	each graph $G_i$ is at most $d+1$. To obtain a vertex cover of all
	edges of $G$, we now compute a $(1+\psi)$-approximate minimum vertex
	cover $C_i$ for each extended cluster graph $G_i'$ by running the
	algorithms described in Theorems \ref{approxMVC:rand} and \ref{approxMVC:det}. We do
	this for all clusters in parallel. For each cluster $S_i'$, we use
	$G_i$ and $G_i'$ as the graphs $G$ and $G'$ in
Theorems \ref{approxMVC:rand} and \ref{approxMVC:det}. Because each edge is contained
	in at most $c$ graphs $G_i$, we can in parallel run $T$-round
	algorithms in all graphs $G_i$ in time $c\cdot T$. The time
	complexities therefore follow directly as claimed from the
	respective time complexities in Theorems \ref{approxMVC:rand} and \ref{approxMVC:det}.
	
	We define $Y:= \bigcup_{i=1}^{t}C_i$. Because every edge of $G$ that
	is not covered by the nodes in $X$ is inside one of the clusters
	$S_i'$, clearly, the set $X\cup Y$ is a vertex cover of $G$. We
	already showed that $|X|\leq 2\eta\OPT$. To bound the size of
	$X\cup Y$, it remains to bound the size of $Y$. Let $\OPT_i$ be the
	size of an optimal vertex cover of $G_i'$. Because the cluster
	graphs $G_i'$ are vertex-disjoint, all edges in $G_i'$ clearly have
	to be covered by some node of the cluster $S_i'$ and thus edges in
	different clusters have to be covered by disjoint sets of nodes. If
	$\OPT$ is the size of an optimal vertex cover of $G$, we thus
	clearly have $\bigcup_{i=1}^t \OPT_i \leq \OPT$. Because $C_i$ is a
	$(1+\psi)$-approximate vertex cover of $G_i'$, we also have
	$|C_i|\leq (1+\psi)\cdot\OPT_i$. Together, we therefore directly get
	that $|Y|\leq (1+\psi)\cdot\OPT$ and therefore
	$|X\cup Y|\leq (1+2\eta+\psi)\cdot\OPT$.
\end{proof}

In order to prove our two main results, Theorems \ref{thm:randomized} and \ref{thm:deterministic}, we will next show how to
efficiently compute the clusterings that are required for Lemma
\ref{approxMVC:givenclustering}. Both clusterings can be obtained by
minor adaptations of existing clustering techniques.

\subsection{The Randomized Clustering }

We start with describing the randomized clustering algorithm. By using
the exponentially shifted shortest paths approach of Miller, Peng, and
Xu~\cite{MPX13}, we obtain the following lemma.

\begin{lemma} \label{clustering:rand} Let $G=(V,E,w)$ be a weighted
	bipartite graph with non-negative edge weights $w(e)$. For
	$\lambda \in (0,1]$, there is a randomized algorithm that computes a
	$3$-hop separated clustering of $G$ such that w.h.p., the clustering
	is $(1,O(\frac{\log n)}{\lambda})$-routable and can be computed in
	$O(\frac{\log n}{\lambda})$ rounds in the \CONGEST model and such
	that the clustering is $(1-\lambda)$-dense in expectation.
\end{lemma}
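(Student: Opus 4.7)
The plan is to adapt the exponentially-shifted shortest-paths clustering of Miller, Peng, and Xu~\cite{MPX13} and then trim the raw clusters to enforce $3$-hop separation. Concretely, I would let every node $v$ sample an independent shift $\delta_v\sim\mathrm{Exp}(\beta)$ with $\beta=c\lambda$ for a sufficiently small constant $c$, and assign $v$ to the center $u$ that minimizes $d_G(u,v)-\delta_u$ (breaking ties by ID). Since $\max_v\delta_v=O(\log n/\beta)$ w.h.p., this assignment is realized in the \CONGEST model by a shifted parallel BFS: each node stores and forwards only the ID of its current best center together with the corresponding effective distance, and w.h.p.\ the BFS stabilizes within $O(\log n/\lambda)$ rounds. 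The standard MPX argument that shortest paths from a center to a cluster member stay inside the cluster then gives strong diameter $O(\log n/\lambda)$ w.h.p.

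Next, from the raw clustering $\{S_1,\dots,S_t\}$ I form the trimmed clustering $\{S_1',\dots,S_t'\}$ by declaring $v\in S_i'$ iff $v\in S_i$ and every vertex within hop-distance $2$ of $v$ in $G$ also lies in $S_i$. Verifying this condition takes two additional \CONGEST rounds, during which each node announces its raw cluster ID to its $2$-hop neighborhood. The trimmed clusters are then automatically $3$-hop separated: if $u\in S_i'$ and $w\in S_j'$ with $i\neq j$ satisfied $d_G(u,w)\le 2$, then $w$ would be a vertex within $2$ hops of $u$ outside $u$'s raw cluster, contradicting $u\in S_i'$. For the routing trees I take, inside each $S_i'$, a BFS tree rooted at the MPX center of $S_i'$; since the raw cluster has strong diameter $O(\log n/\lambda)$ w.h.p., each such tree has depth $O(\log n/\lambda)$ w.h.p. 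Vertex-disjointness of the trimmed clusters makes these trees edge-disjoint, so the clustering is $(1,O(\log n/\lambda))$-routable.

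It remains to calibrate $c$ so that the expected intra-cluster edge weight is at least $(1-\lambda)W$. The key ingredient is a degree-free boundary estimate: for any vertex $u$, the probability that \emph{some} vertex within hop-distance $3$ of $u$ lies in a different raw cluster than $u$ is only $O(\beta)$. Given this, for any edge $e=\{u,v\}$ the event that $e$ is trimmed away is covered by such an event at $u$, so $\Pr[e\text{ not intra-cluster in the trimmed clustering}]=O(\beta)=O(\lambda)$. Summing $w(e)\cdot\Pr[e\text{ not intra-cluster}]$ over all edges shows that the expected weight lost is $O(\lambda)W$, and choosing the constant $c$ small enough makes this at most $\lambda W$.

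The one step that genuinely requires care, and thus is the main obstacle, is the degree-free boundary estimate. A naive union bound over the up to $\Delta^3$ vertices within $3$ hops of $u$ would produce an unaffordable $\Delta^3$ factor. Instead I would follow the MPX analysis: consider the top two centers for $u$ in the objective $d_G(\cdot,u)-\delta_\cdot$, use the fact that this objective is $1$-Lipschitz in the first argument to show that if any vertex within $3$ hops of $u$ has a different winning center, then the gap between the top two objective values at $u$ must be $O(1)$, and then invoke memorylessness of the exponential distribution (as in the arguments of~\cite{MPX13,blelloch14}) to bound the probability of such a small gap by $O(\beta)$.
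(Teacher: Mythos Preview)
Your approach is essentially the paper's: run MPX with rate $\Theta(\lambda)$, trim the raw clusters to enforce $3$-hop separation, and bound the lost edge weight per edge via the MPX gap lemma (memorylessness of the exponential) rather than by a naive union bound over neighbors. The only difference is cosmetic---the paper trims by deleting both endpoints of every inter-cluster edge (so the relevant gap bound is $\le 2$ and a union bound over the two endpoints gives $4\sigma$), whereas you delete every node whose $2$-ball leaves its raw cluster (so the gap bound becomes $\le 6$); both give $O(\beta)$ loss probability per edge.

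One small imprecision worth fixing: your trimmed sets $S_i'$ need not be connected in $G[S_i']$ (a shortest path inside the raw cluster $S_i$ can pass through a node whose $2$-ball touches another cluster and hence gets trimmed), so ``a BFS tree inside $S_i'$'' may fail to span $S_i'$ and the MPX center of $S_i$ may not even lie in $S_i'$. The fix is immediate: take as the routing tree for $S_i'$ the BFS tree of the \emph{raw} cluster $S_i$. The definition of $(c,d)$-routable only requires that the tree's node set \emph{contain} $S_i'$, and since the raw clusters partition $V$ their BFS trees are edge-disjoint, giving $(1,O(\log n/\lambda))$-routability as claimed.
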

\begin{proof} 
	Let $G=(V,E,w)$ be a weighted bipartite graph with non-negative edge
	weights $w(e)$ and assume that $W:=\sum_{w\in E} w(e)$ is the total
	weight of all edges in $G$. Let $\lambda \in (0,1]$. We first run a
	partitioning algorithm using the exponentially shifted shortest
	paths' method of \cite{MPX13}. Each vertex $u$ in $G$ picks shifts
	$\delta_u$ from independent exponential distributions with parameter
	$\sigma=\lambda/4$. For two nodes $u$ and $v$, let the shifted
	distance from $u$ to $v$ be
	$\dist_{-\delta}(u,v):=\dist_G(u,v)-\delta_u$. Each node $v$ is
	assigned to a cluster $S_u$ if the shifted distance
	$\dist_{-\delta}(u,v)$ is minimized among all nodes $u\in V$. This
	algorithm outputs a partition of $V$ into connected clusters (if $v$
	is in cluster $S_u$, then all nodes on a shortest path from $u$ to
	$v$ are also in cluster $S_u$).
	
	After partitioning the nodes into clusters, we shrink all the
	clusters as follows. For every edge $\set{x,y}$ that is between
	clusters, we remove both $x$ and $y$ from their respective
	clusters. For every node $v$ that remains in a cluster, before
	shrinking the clusters all neighbors of $v$ were in the same cluster
	as $v$. Therefore, two nodes $u$ and $v$ in different clusters
	cannot have a common neighbor and therefore we now clearly have a
	3-hop separated clustering. We next bound the number of matching
	edges outside clusters (i.e., edges for which not both endpoints are
	inside a cluster).  For each such edge $\set{u,v}$, we know that at
	least one of the two nodes $u$ or $v$ has a neighbor $w$ that was
	initially assigned to a different cluster. W.l.o.g., assume that $v$
	and $w$ are neighbors and that those two nodes were assigned to
	different clusters. Assume that $v$ is initially assigned to cluster
	$S_x$ and $w$ is initially assigned to cluster $S_y$ (where
	$x\neq y$). We then know that
	$\dist_{-\delta}(x,v)\leq \dist_{-\delta}(y,w)+1$ and
	$\dist_{-\delta}(y,w)\leq \dist_{-\delta}(x,v)+1$, and therefore
	$|\dist_{-\delta}(x,v) - \dist_{-\delta}(y,w)|\leq 1$. This also
	implies that $|\dist_{-\delta}(x,v) - \dist_{-\delta}(y,v)|\leq 2$,
	i.e., the difference between the smallest and the second smallest
	shifted distance for $v$ is at most $2$. In [Lemma 4.3, \cite{
		MPX13}], it is shown that for every node $v$, the probability that
	the two smallest shifted distances for $v$ differ by at most $2$ is
	bounded by $2\sigma$. By a union bound over the two nodes $u$ and
	$v$ of the edge $\set{u,v}$, we therefore get that the probability
	that the edge is outside a cluster (after shrinking clusters) is at
	most $4\sigma$. By linearity of expectation, we therefore
	immediately get that the expected total weight of all the edges
	outside clusters is at most $4\sigma W=\lambda W$. Thus, our
	clustering is $(1-\lambda)$-dense in expectation. 
	
	Furthermore in \cite{MPX13}, it is shown that with high probability,
	the strong diameter of each cluster is bounded by
	$O(\frac{\log n}{\sigma})$. By just computing a BFS tree of each
	cluster, we therefore directly obtain that the computed clustering
	is $(1,O(\frac{\log n}{\lambda}))$-routable clustering,
	w.h.p. Finally notice that their partition algorithm of \cite{MPX13}
	can be directly implemented in $O(\frac{\log n}{\lambda})$ \CONGEST
	rounds. Each node can sort the values of the shifted distances it
	receives in each round and always just forward the smallest one
	among them to the neighbors. Also note that the value of an exponential random variable is an arbitrary real number and therefore cannot be represented by $O(\log n)$ bits. It is however clearly sufficient to round each of the exponential random variables such that the relative accuracy is
	$(1\pm 1/n^c)$. If we choose the constant $c$ sufficiently large, w.h.p.,
	the random variable remain distinct and the relative order of the shifted distances does not change.
\end{proof}

We now have everything that we need to prove our first main result,
our randomized polylogarithmic-time approximation scheme for the MVC
problem in bipartite graphs.

\begin{proof}[\textbf{Proof of Theorem \ref{thm:randomized}}]
	Let $G=(V,E)$ be the given bipartite graph for which we want to
	approximate the MVC problem. We first compute a maximal matching $M$
	of $G$, which we can for example do by using Luby's algorithm~\cite{alon86,luby86} in
	$O(\log n)$ rounds. By using $M$, we then apply Lemma \ref{clustering:rand} with
	$\lambda=\eps/4$ to obtain a $3$-hop separated $\big(1,
	O\big(\frac{\log n}{\eps}\big)\big)$-routable clustering that is
	$(1-\eps/4)$-dense in expectation. The time for computing the
	clustering is $O\big(\frac{\log n}{\eps}\big)$, w.h.p. By applying
Lemma \ref{approxMVC:givenclustering} with $\eta=\eps/4$ and
	$\psi=\eps/2$, we then get a vertex cover of $G$ in 
	$O\big(\frac{\log n}{\eps^3}\big)$ \CONGEST rounds such that the expected size of
	the vertex cover is at most $(1+\eps)\cdot\OPT$, where $\OPT$ is the
	size of a minimum vertex cover of $G$. This concludes the proof of
	the theorem.
\end{proof}

\subsection{The Deterministic Clustering }

We obtain the deterministic version of the necessary clustering by
adapting the construction of a single color class of the recent
efficient deterministic network
decomposition algorithm of Rozho\v{n} and
Ghaffari~\cite{polylogdecomp}.

\begin{lemma} \label{clustering:det}
	Let $G=(V,E,w)$ be a weighted bipartite graph with non-negative edge
	weights $w(e)\in \set{0,1}$.  For $\lambda \in (0,1]$, there is a
	deterministic algorithm that computes an $(1-\lambda)$-dense,
	$3$-hop separated, and $\big(O(\log n),O\big(\frac{\log^3n}{\lambda}\big)\big)$-routable clustering
	of $G$ in $\poly\big(\frac{\log n}{\lambda}\big)$ rounds in the \CONGEST model.
\end{lemma}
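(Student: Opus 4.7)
The plan is to adapt the deterministic network decomposition algorithm of Rozho\v{n} and Ghaffari \cite{polylogdecomp}, which in $\poly\log n$ \CONGEST rounds produces an $(O(\log n), O(\log^2 n))$-network decomposition of any graph. Relative to their theorem, our setting is easier in that we only need $(1-\lambda)$ of the total weight to lie inside clusters (rather than all vertices to be covered), but harder in that we require $3$-hop separation and an explicit $(O(\log n), O(\log^3 n / \lambda))$-routable structure.

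First, I would apply the Rozho\v{n}--Ghaffari clustering subroutine in $t = O(\log n /\lambda)$ successive iterations. Let $A_0$ be the set of endpoints of matching edges. At iteration $r$, invoke the subroutine on the subgraph induced by the currently uncovered active set $A_{r-1}$ to produce a layer of vertex-disjoint clusters with weak diameter $O(\log^2 n)$ in $G$; remove the newly clustered nodes from the active set to form $A_r$. The Rozho\v{n}--Ghaffari guarantee is that each such call covers a constant fraction of its active set up to an $O(\log n)$ slack, so after $O(\log n /\lambda)$ iterations the matching weight outside clusters drops below $\lambda |M|$. Since each iteration takes $\poly\log n$ \CONGEST rounds, the total running time is $\poly(\log n /\lambda)$, as claimed.

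Second, I would enforce $3$-hop separation exactly as in the randomized Lemma \ref{clustering:rand}: for every edge whose endpoints lie in two distinct clusters, delete both endpoints from their respective clusters. A standard charging argument absorbs the resulting loss of density into a constant factor, so initializing the density target at $1-\lambda/2$ suffices. For each surviving cluster $S_i$, the routing tree $T_i$ is taken as a BFS tree in $G$ rooted at a designated center; its height is bounded by the weak diameter $O(\log^2 n)$ times the number of iterations $O(\log n /\lambda)$, giving the claimed $O(\log^3 n /\lambda)$, and the congestion bound $O(\log n)$ is inherited directly from the routing forests of the Rozho\v{n}--Ghaffari construction.

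The main obstacle, as usual when transferring a network-decomposition result to a weighted-density setting, is the transition from the node-coverage guarantees of Rozho\v{n}--Ghaffari to the \emph{weighted edge-coverage} needed here: a constant fraction of nodes being covered in an iteration need not imply a constant fraction of matching edges, since a matching edge is only covered when both of its endpoints land in the same cluster. The charging argument that repairs this is where the extra $\log n$ factor in the iteration count (and hence in the tree-height bound) originates, and verifying this carefully while tracking how the shrinking step interacts with the per-iteration weak-diameter guarantee is the most delicate part of the proof.
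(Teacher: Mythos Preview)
Your proposal has a genuine structural gap. You build the clustering by \emph{iterating} the Rozho\v{n}--Ghaffari single-color subroutine $O(\log n/\lambda)$ times on successively smaller active sets and then taking the union of all layers as the final clustering. But clusters produced in different iterations are not separated from one another: a cluster created in layer $r$ can be adjacent in $G$ to a cluster from layer $r-1$, since layer $r$ is built precisely on the nodes left unclustered by earlier layers. Your post-hoc shrinking step must therefore absorb \emph{all} inter-layer boundaries, and nothing in the argument bounds how much weight that removes --- in bad instances it can delete a constant fraction of the density you just accumulated, so the sentence ``a standard charging argument absorbs the resulting loss of density into a constant factor'' is exactly the unproven step. (You correctly flag the node-coverage versus edge-weight-coverage issue as the crux, but the proposal does not actually resolve it.) Relatedly, the tree-height accounting --- weak diameter $O(\log^2 n)$ times the number of iterations $O(\log n/\lambda)$ --- has no justification: each cluster comes from a single iteration and carries its own Steiner tree, so there is nothing to multiply.

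The paper avoids iteration entirely. It runs a \emph{single} invocation of the Rozho\v{n}--Ghaffari first-color construction on $G^2$ (which delivers the $3$-hop separation directly, without any shrinking step), with two internal modifications: (i) wherever the subroutine compares the number of nodes requesting to join a cluster against the number already inside, it compares total node weights $\nu(v) := \sum_{e \ni v} w(e)$ instead; and (ii) the acceptance threshold is scaled by a factor $\Theta(\lambda)$, so that the total weight of deactivated nodes is at most a $\lambda/2$-fraction of the whole. Modification~(ii) is what inflates the Steiner-tree depth by the factor $O(1/\lambda)$ to give $O(\log^3 n/\lambda)$, and the running time by $\poly(1/\lambda)$. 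No outer loop is needed, and the $O(\log n)$ congestion bound is inherited from a single color class of~\cite{polylogdecomp}.
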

\begin{proof}
	We assume that $W:=\sum_{w\in E} w(e)$ is the total weight of all
	edges in $G$. Let $\lambda \in (0,1]$. We adapt the weak diameter
	network decomposition algorithm of Rozho\v{n} and
	Ghaffari~\cite{polylogdecomp} applied to the graph $G^2$ in the
	\CONGEST model. When applied to $G^2$, Theorem 2.12 of
	\cite{polylogdecomp} shows that the algorithm of
	\cite{polylogdecomp} computes a decomposition of the nodes $V$ into
	clusters of $O(\log n)$ colors such that any two nodes in different
	clusters of the same color are at distance at least $3$ from each
	other (in $G$). Each cluster is spanned by a Steiner tree of
	diameter $O(\log^3 n)$ such that each edge of $G$ is used by at most
	$O(\log n)$ different Steiner trees for each of the $O(\log n)$
	color classes. For our purpose, we only need to construct the first
	color class of this decomposition. For the first color class, the
	proof of Theorem 2.12 of \cite{polylogdecomp} implies that the
	clusters of the first color are $3$-hop separated and that they
	contain a constant fraction of all the nodes. We need to adapt the
	construction of the first color class of the algorithm of
	\cite{polylogdecomp} in two ways. In the following, we only sketch
	these changes.
	
	First, we adapt the algorithm so that it can handle weights. In the
	following, we define node weight $\nu(v)\geq 0$ as follows. For each
	node $v$, we define $\nu(v)$ as the sum of the weights $w(e)$ of the
	edges $e$ that are incident to $v$. Note that this implies that the
	total weight of all the nodes is $2W$ and that the total weight of
	all the nodes that are not clustered is an upper bound on the total
	weight of all the edges outside clusters (i.e., all the edges, where
	at most one endpoint is inside a cluster). In the algorithm of
	\cite{polylogdecomp}, the clustering is computed in different
	steps. In each step, some nodes request to join a different cluster
	and a cluster accepts these requests if the total number of nodes
	requesting to join the cluster is large enough compared to the total
	number of nodes already inside the cluster. If a cluster does not
	accept the requests, the requesting nodes are deactivated and will
	not be clustered. The threshold on the number of requests required to
	accept the requests is chosen such that in the end the weak diameter of
	the clusters is not too large and at the same time, only a constant
	fraction of all nodes are deactivated and thus not clustered. In our
	case, we do not care how many nodes are clustered and unclustered,
	but we care about the total weight of nodes that are clustered and
	unclustered. The analysis of \cite{polylogdecomp} however directly
	also works if we instead compare the total weight of the nodes that request
	to join a cluster with the total weight of the nodes that are
	already inside the cluster. If the node weights are polynomially
	bounded non-negative integers (which they are in our case), the
	asymptotic guarantees of the construction are exactly the same. In
	this way, we can make sure to construct $(O(\log n), O(\log^3
	n))$-routable, $3$-hop separated clusters such that a constant
	fraction of the total weight of all the nodes is inside clusters.
	
	As a second change, in order to make sure that the clustering is
	also $(1-\lambda)$-dense, we need to guarantee that the total weight of
	the nodes that are unclustered is at most a $\lambda/2$-fraction of the
	total weight of all the nodes. We can guarantee this, by adapting
	the threshold for accepting nodes to a cluster. We essentially have
	to multiply the threshold by a factor $\Theta(\lambda)$ to make sure
	that this is the case. This increases the maximal possible cluster
	diameter by a factor $O(1/\lambda)$ and it increases the total
	running time by a factor $\poly(1/\lambda)$.
\end{proof}

\noindent\textit{Remark:} In the above lemma, we assumed for simplicity that the
edge weights are either $0$ or $1$. The construction however directly
also works in the same way and with the same asymptotic guarantees if
the edge weights are polynomially bounded non-negative integers. With
some simple preprocessing, one can also obtain the same asymptotic
result for arbitrary non-negative edge weights.

\smallskip

In a similar way as we proved Theorem \ref{thm:randomized}, we can now also
prove our second main result, our deterministic polylogarithmic-time
approximation scheme for the MVC problem in bipartite graphs.

\begin{proof}[\textbf{Proof of Theorem \ref{thm:deterministic}}]
	Let $G=(V,E)$ be the given bipartite graph for which we want to
	approximate the MVC problem. We first compute a maximal matching $M$
	of $G$, which we can do by using the algorithm of Fischer~\cite{rounding} in
	$O(\log^2\Delta \cdot \log n)$ deterministic rounds in the \CONGEST model. By using $M$, we then apply Lemma \ref{clustering:det} with
	$\lambda=\eps/4$ to obtain a $(1-\eps/4)$-dense, $3$-hop separated $\big(O(\log n),
	\poly\big(\frac{\log n}{\eps}\big)\big)$-routable clustering. By
	Lemma \ref{clustering:det}, the time for computing the
	clustering in the \CONGEST model is $\poly\big(\frac{\log n}{\eps}\big)$. By applying
Lemma \ref{approxMVC:givenclustering} with $\eta=\eps/4$ and
	$\psi=\eps/2$, we then get a $(1+\eps)$-approximate vertex cover of $G$ in 
	$\poly\big(\frac{\log n}{\eps}\big)$ \CONGEST rounds, which
	completes the proof of the theorem.
\end{proof}


\bibliographystyle{alpha}
\bibliography{references}

\newcommand{\etalchar}[1]{$^{#1}$}
\begin{thebibliography}{BCM{\etalchar{+}}20}

\bibitem[ABI86]{alon86}
N.~Alon, L.~Babai, and A.~Itai.
\newblock A fast and simple randomized parallel algorithm for the maximal
  independent set problem.
\newblock {\em Journal of Algorithms}, 7(4):567--583, 1986.

\bibitem[{\AA}FP{\etalchar{+}}09]{AstrandFPRSU09}
M.~{\AA}strand, P.~Flor{\'{e}}en, V.~Polishchuk, J.~Rybicki, J.~Suomela, and
  J.~Uitto.
\newblock A local 2-approximation algorithm for the vertex cover problem.
\newblock In {\em Proc.\ 23rd Symp.\ on Distributed Computing (DISC)}, pages
  191--205, 2009.

\bibitem[AKO18]{ahmadi18}
M.~Ahmadi, F.~Kuhn, and R.~Oshman.
\newblock Distributed approximate maximum matching in the {CONGEST} model.
\newblock In {\em Proc.\ 32nd Symp.\ on Distributed Computing (DISC)}, pages
  6:1--6:17, 2018.

\bibitem[BCD{\etalchar{+}}19]{bachrach_podc19}
N.~Bachrach, K.~Censor{-}Hillel, M.~Dory, Y.~Efron, D.~Leitersdorf, and A.~Paz.
\newblock Hardness of distributed optimization.
\newblock In {\em Proc.\ 38th {ACM} Symp.\ on Principles of Distributed
  Computing (PODC)}, pages 238--247, 2019.

\bibitem[BCGS17]{yehuda17}
R.~Bar{-}Yehuda, K.~Censor{-}Hillel, M.~Ghaffari, and G.~Schwartzman.
\newblock Distributed approximation of maximum independent set and maximum
  matching.
\newblock {\em CoRR}, abs/1708.00276, 2017.
\newblock Conference version at PODC 2017.

\bibitem[BCM{\etalchar{+}}20]{Bar-YehudaCMPP20}
R.~Bar{-}Yehuda, K.~Censor{-}Hillel, Y.~Maus, S.~Pai, and S.~V. Pemmaraju.
\newblock Distributed approximation on power graphs.
\newblock In {\em Proc.\ 39th {ACM} Symp.\ on Principles of Distributed
  Computing (PODC)}, pages 501--510, 2020.

\bibitem[BCS16]{yehuda16}
R.~Bar{-}Yehuda, K.~Censor{-}Hillel, and G.~Schwartzman.
\newblock A distributed (2+{\(\epsilon\)})-approximation for vertex cover in
  o(log{\(\delta\)}/{\(\epsilon\)} log log {\(\delta\)}) rounds.
\newblock In {\em Proceedings of the {ACM} Symposium on Principles of
  Distributed Computing (PODC)}, pages 3--8, 2016.

\bibitem[BEKS19]{disc19_optcovering}
R.~Ben{-}Basat, G.~Even, K.~Kawarabayashi, and G.~Schwartzman.
\newblock Optimal distributed covering algorithms.
\newblock In {\em Proc.\ 33rd Symp.\ on Distributed Computing (DISC)}, pages
  5:1--5:15, 2019.

\bibitem[BEPS12]{barenboim12}
L.~Barenboim, M.~Elkin, S.~Pettie, and J.~Schneider.
\newblock The locality of distributed symmetry breaking.
\newblock In {\em Proceedings of 53th Symposium on Foundations of Computer
  Science (FOCS)}, 2012.

\bibitem[BGK{\etalchar{+}}14]{blelloch14}
G.~E. Blelloch, A.~Gupta, I.~Koutis, G.~L. Miller, R.~Peng, and K.~Tangwongsan.
\newblock Nearly-linear work parallel {SDD} solvers, low-diameter
  decomposition, and low-stretch subgraphs.
\newblock {\em Theory Comput.~Syst.}, 55(3):521--554, 2014.

\bibitem[CHKP17]{censorhillel_disc17}
K.~Censor-Hillel, S.~Khoury, and A.~Paz.
\newblock Quadratic and near-quadratic lower bounds for the {CONGEST} model.
\newblock In {\em Proc.\ 31st Symp.\ on Distributed Computing (DISC)}, pages
  10:1--10:16, 2017.

\bibitem[Die05]{koenig_diestel}
R.~Diestel.
\newblock {\em Graph Theory}, chapter 2.1.
\newblock Springer, Berlin, 3rd edition, 2005.

\bibitem[Fis17]{rounding}
M.~Fischer.
\newblock Improved deterministic distributed matching via rounding.
\newblock In {\em Proc.\ 31st Symp.\ on Distributed Computing (DISC)}, pages
  17:1--17:15, 2017.

\bibitem[FMS15]{FMS15}
Uriel Feige, Yishay Mansour, and Robert~E. Schapire.
\newblock Learning and inference in the presence of corrupted inputs.
\newblock In {\em Proc.\ 28th Conf.\ on Learning Theory (COLT)}, pages
  637--657, 2015.

\bibitem[GJN20]{GhaffariJN20}
M.~Ghaffari, C.~Jin, and D.~Nilis.
\newblock A massively parallel algorithm for minimum weight vertex cover.
\newblock In {\em Proc.\ 32nd {ACM} Symp.\ on Parallelism in Algorithms and
  Architectures (SPAA)}, pages 259--268, 2020.

\bibitem[GKM17]{ghaffari2017complexity}
M.~Ghaffari, F.~Kuhn, and Y.~Maus.
\newblock On the complexity of local distributed graph problems.
\newblock In {\em Proc.\ 39th ACM Symp.\ on Theory of Computing (STOC)}, pages
  784--797, 2017.

\bibitem[GKP08]{GrandoniKP08}
F.~Grandoni, J.~K{\"{o}}nemann, and A.~Panconesi.
\newblock Distributed weighted vertex cover via maximal matchings.
\newblock {\em {ACM} Trans.\ Algorithms}, 5(1):6:1--6:12, 2008.

\bibitem[GKPS08]{GrandoniKPS08}
F.~Grandoni, J.~K{\"{o}}nemann, A.~Panconesi, and M.~Sozio.
\newblock A primal-dual bicriteria distributed algorithm for capacitated vertex
  cover.
\newblock {\em {SIAM} J. Comput.}, 38(3):825--840, 2008.

\bibitem[GS14]{goeoes14_DISTCOMP}
M.~G\"o\"os and J.~Suomela.
\newblock No sublogarithmic-time approximation scheme for bipartite vertex
  cover.
\newblock {\em Distributed Computing}, 27(6):435--443, 2014.

\bibitem[HK73]{hopkarp}
J.~E. Hopcroft and R.~M. Karp.
\newblock An $n^{5/2}$ algorithm for maximum matchings in bipartite graphs.
\newblock {\em SIAM Journal on Computing}, 1973.

\bibitem[II86]{itai86}
A.~Israeli and A.~Itai.
\newblock A fast and simple randomized parallel algorithm for maximal matching.
\newblock {\em Inf. Process. Lett.}, 22(2):77--80, 1986.

\bibitem[K\H31]{koenig31}
D.~K\H{o}nig.
\newblock Gr\'{a}fok \'{e}s m\'{a}trixok.
\newblock {\em Matematikai \'{e}s Fizikai Lapok}, 38:116--119, 1931.

\bibitem[KMW04]{lowerbound}
F.~Kuhn, T.~Moscibroda, and R.~Wattenhofer.
\newblock What cannot be computed locally!
\newblock In {\em Proceedings of 23rd ACM Symposium on Principles of
  Distributed Computing (PODC)}, pages 300--309, 2004.

\bibitem[KMW06]{nearsighted}
F.~Kuhn, T.~Moscibroda, and R.~Wattenhofer.
\newblock The price of being near-sighted.
\newblock In {\em Proceedings of 17th Symposium on Discrete Algorithms (SODA)},
  pages 980--989, 2006.

\bibitem[KR08]{MVC_UGChard}
Subhash Khot and Oded Regev.
\newblock Vertex cover might be hard to approximate to within 2-epsilon.
\newblock {\em J. Comput. Syst. Sci.}, 74(3):335--349, 2008.

\bibitem[LPP15]{lotker15}
Z.~Lotker, B.~Patt{-}Shamir, and S.~Pettie.
\newblock Improved distributed approximate matching.
\newblock {\em J.\ ACM}, 62(5):38:1--38:17, 2015.

\bibitem[LS93]{linial93}
N.~Linial and M.~Saks.
\newblock Low diameter graph decompositions.
\newblock {\em Combinatorica}, 13(4):441--454, 1993.

\bibitem[Lub86]{luby86}
M.~Luby.
\newblock A simple parallel algorithm for the maximal independent set problem.
\newblock {\em SIAM Journal on Computing}, 15:1036--1053, 1986.

\bibitem[MPX13]{MPX13}
G.~L. Miller, R.~Peng, and S.~C. Xu.
\newblock Parallel graph decompositions using random shifts.
\newblock In {\em Proc.\ 25th {ACM} Symp.\ on Parallelism in Algorithms and
  Architectures (SPAA)}, pages 196--203, 2013.

\bibitem[Pel00]{peleg00}
D.~Peleg.
\newblock {\em Distributed Computing: A Locality-Sensitive Approach}.
\newblock SIAM, 2000.

\bibitem[RG20]{polylogdecomp}
V.~Rozho\v{n} and M.~Ghaffari.
\newblock Polylogarithmic-time deterministic network decomposition and
  distributed derandomization.
\newblock In {\em Proc.\ 52nd {ACM} Symp.\ on Theory of Computing (STOC)},
  pages 350--363, 2020.

\end{thebibliography}

\end{document}